\DeclarePairedDelimiter\floor{\lfloor}{\rfloor}
\newcommand*{\rom}[1]{\expandafter\@slowromancap\romannumeral #1@}
\def\P{\mathbb{P}}
\newcommand{\upperRomannumeral}[1]{\uppercase\expandafter{\romannumeral#1}}
\newtheorem{lemma}{Lemma}{}
  \newtheorem{thm}{Theorem}
\theoremstyle{remark} \newtheorem{remark}{Remark}
\title{High SNR Consistent Compressive Sensing Without Signal and Noise Statistics } 
\author{Sreejith Kallummil,  \hspace{0cm} Sheetal Kalyani
 \thanks{Sreejith Kallummil is with Samsung Adavanced Institute of Technology (SAIT-INDIA), Bangalore, India. This work was done while he was a graduate student in the Department of Electrical Engineering, IIT Madras. Email: sreejith.k.venugopal@gmail.com }
 \thanks{Sheetal Kalyani is with the Department of Electrical Engineering, Indian
Institute of Technology Madras, Chennai 600036, India. Email:  skalyani@ee.iitm.ac.in.}

}
\begin{document}
\maketitle

\begin{abstract}
Recovering the support of sparse vectors in underdetermined linear regression models, \textit{aka}, compressive sensing is important in many signal processing applications.   High SNR consistency (HSC), i.e., the ability  of a support recovery technique to correctly identify the support with increasing signal to noise ratio (SNR) is an increasingly  popular criterion to qualify the high SNR optimality of support recovery techniques. The  
HSC results available in literature for support recovery techniques applicable to  underdetermined linear regression models  like least absolute shrinkage and selection operator (LASSO), orthogonal matching pursuit (OMP) etc.  assume \textit{a priori} knowledge of noise variance  or signal sparsity.  However, both these parameters are unavailable in most practical applications. Further, it is extremely difficult to estimate noise variance  or signal sparsity in underdetermined regression models.  This limits the utility of existing HSC results. In this article, we propose two techniques, \textit{viz.}, residual ratio minimization (RRM) and residual ratio thresholding with adaptation (RRTA)  to operate  OMP algorithm without the \textit{a priroi}  knowledge of noise variance  and signal sparsity  and establish their HSC analytically and numerically.  To the best of our knowledge, these are the first  and only noise statistics oblivious algorithms to report HSC in underdetermined regression models. 
\end{abstract}

%
%
%


\section{Introduction}
Consider a linear regression model 
\begin{equation}
{\bf y}={\bf X}\boldsymbol{\beta}+{\bf w},
\end{equation} 
where ${\bf y} \in \mathbb{R}^n$ is the observation vector, ${\bf X}\in \mathbb{R}^{n \times p}$ is the $n \times p$ design matrix, $\boldsymbol{\beta}\in \mathbb{R}^{p}$ is the unknown regression vector and ${\bf w} \in \mathbb{R}^{n}$ is the noise vector. We consider a high dimensional or underdetermined scenario where the number of observations ($n$) is much less than the number of variables/predictors ($p$).  We also assume that the entries in the noise vector ${\bf w}$ are independent and identically distributed Gaussian random variables  with  mean zero and variance $\sigma^2$. Such regression models are widely studied in signal processing literature under the compressive sensing or compressed sensing paradigm \cite{eldar2012compressed}.  Subset selection in  linear regression models refers to the identification of support $\mathcal{S}=supp(\boldsymbol{\beta})=\{k:\boldsymbol{\beta}_k\neq 0\}$, where $\boldsymbol{\beta}_k$ refers to the $k^{th}$ entry of $\boldsymbol{\beta}$. Identifying supports in underdetermined or  high dimensional linear models is an ill posed problem  even in the absence of noise ${\bf w}$ unless the design matrix ${\bf X}$ satisfies  regularity conditions \cite{eldar2012compressed} like restricted isometry property (RIP), mutual incoherence property (MIC), exact recovery condition (ERC) etc. and $\boldsymbol{\beta}$ is sparse. A vector $\boldsymbol{\beta}$ is called sparse if  the cardinality of support $\mathcal{S}$ given by $k_0=card(\mathcal{S})\ll p$.  In words, only few entries of a sparse vector $\boldsymbol{\beta}$ will be  non-zero.   Identification of sparse supports in underdetermined linear regression models have  many applications including and not limited to detection in multiple input multiple output (MIMO)\cite{choi2017detection} and generalised MIMO systems\cite{yu2012compressed,kallummil2016combining}, multi user detection\cite{shim2012multiuser}, subspace clustering\cite{you2016scalable} etc.  This article discusses this important problem of recovering sparse supports in high dimensional linear regression models. After presenting the notations used in this article, we provide a brief summary of sparse support recovery techniques discussed in literature and the exact problem discussed in this article. 
\subsection{Notations used}
$col({\bf A})$ the column space of matrix ${\bf A}$. ${\bf A}^T$ is the transpose and ${\bf A}^{\dagger}=({\bf A}^T{\bf A})^{-1}{\bf A}^T$ is the  Moore-Penrose pseudo inverse of ${\bf A}$. ${\bf P}_{\bf A}={\bf A}{\bf A}^{\dagger}$ is the projection matrix onto $col({\bf A})$. ${\bf A}_{\mathcal{J}}$ denotes the sub-matrix of ${\bf A}$ formed using  the columns indexed by $\mathcal{J}$.  When ${\bf A}$ is clear from the context, we use the shorthand ${\bf P}_{\mathcal{J}}$ for ${\bf P}_{{\bf A}_{\mathcal{J}}}$. Both ${\bf a}_{\mathcal{J}}$ and ${\bf a}(\mathcal{J})$ denote the  entries of vector ${\bf a}$ indexed by $\mathcal{J}$.  $\mathcal{N} ({\bf u},{\bf C})$ is a Gaussian random  vector (R.V) with mean ${\bf u}$ and covariance ${\bf C}$. $\mathbb{B}(a,b)$ represents a Beta R.V with parameters $a$ and $b$ and $B(a,b)$ represents the Beta function. $F_{a,b}(x)=\dfrac{1}{B(a,b)}\int_{t=0}^x t^a(1-t)^b$ is the CDF of a $\mathbb{B}(a,b)$ R.V.   ${\bf a}\sim{\bf b}$ implies that  R.Vs ${\bf a}$ and ${\bf b}$ are identically distributed.  
$\|{\bf a}\|_m=(\sum\limits_{j}|{\bf a}_j|^m)^{\frac{1}{m}}$  for $1\leq m\leq \infty$ is the $l_m$ norm and $\|{\bf a}\|_0=card(supp({\bf a}))$ is the $l_0$ quasi norm of ${\bf a}$. 
For any two index sets $\mathcal{J}_1$ and $\mathcal{J}_2$, the set difference  $\mathcal{J}_1/\mathcal{J}_2=\{j \in \mathcal{J}_1: j\notin  \mathcal{J}_2\}$.  $X\overset{p}{\rightarrow } Y$ denotes the  convergence of random variable $X$ to $Y$ in probability. $\mathbb{P}()$ and $\mathbb{E}()$ represent probability  and expectation. Signal to noise ratio (SNR) for the regression model (1) is given by $SNR=\dfrac{\mathbb{E}(\|{\bf X}\boldsymbol{\beta}\|_2^2)}{\mathbb{E}(\|{\bf w}\|_2^2)}=\dfrac{\|{\bf X}\boldsymbol{\beta}\|_2^2}{n\sigma^2}$. 

\subsection{High SNR consistency in linear regression} \label{intro}
The quality of a  support selection technique  delivering a support estimate $\hat{\mathcal{S}}$ is typically quantified in terms of  the   probability of  support recovery error   $PE=\mathbb{P}(\hat{\mathcal{S}}\neq \mathcal{S})$ or the probability of correct support recovery  $PCS=1-PE$. The high SNR behaviour (i.e. behaviour as $\sigma^2 \rightarrow 0$ or $SNR\rightarrow \infty$) of support recovery techniques in general and  the concept of high SNR consistency (HSC) defined below  in particular has attracted considerable attention in statistical signal processing community recently\cite{ding2011inconsistency,schmidt2012consistency,stoica2012proper,stoica2013model,SNLShighSNR,tsp,spl,elsevier}. 

{\bf Definition 1:-}  A support recovery technique is defined to be high SNR consistent (HSC) iff $\underset{\sigma^2 \rightarrow 0}{\lim}PE=0$ or equivalently  $\underset{SNR \rightarrow \infty}{\lim}PE=0$.

 In applications where the problem size $(n,p)$ is small and constrained, the support recovery performance can be improved only by increasing the SNR. This makes HSC and high SNR behaviour in general very important in certain practical applications.

Most of the existing literature on HSC deal with overdetermined ($n>p$) or low dimensional linear regression models. In this context, high SNR consistent model order selection techniques like exponentially embedded family (EEF)\cite{ding2011inconsistency}\cite{eefenumeration}, normalised minimum description length (NMDL)\cite{schmidt2012consistency}, forms of Bayesian information criteria (BIC)\cite{stoica2013model}, penalised adaptive likelihood (PAL)\cite{stoica2013model}, sequentially normalised least squares (SNLS)\cite{SNLShighSNR} etc.  when combined with a $t$-statistics based variable  ordering scheme were shown to  be HSC \cite{tsp}. Likewise, the necessary and sufficient conditions (NSC) for the HSC of threshold based support recovery schemes were derived in \cite{spl}.   However, both these HSC support recovery procedures are applicable only to overdetermined ($n>p$) regression models and are not applicable to the underdetermined ($n<p$) regression  problem discussed in this article.  Necessary and sufficient conditions  for the high SNR consistency of 
compressive sensing algorithms like OMP\cite{tropp2004greed,cai2011orthogonal,tropp2007signal} and variants of LASSO \cite{tropp2006just} are derived in \cite{elsevier}.  However, for HSC  and good finite SNR estimation performance, both OMP and LASSO require either the \textit{a priori} knowledge of noise variance $\sigma^2$ or sparsity level $k_0$. Both these quantities are unknown \textit{a priori} in most practical applications.  However, unlike the case of overdetermined regression models where unbiased estimates of $\sigma^2$  with explicit finite  sample guarantees are available, no estimate of $\sigma^2$ with such finite sample guarantees are available in underdetermined regression models to the best of our knowledge.  Similarly, we are also not aware of any technique to efficiently estimate the sparsity level $k_0$.    Hence, the application of HSC results in \cite{elsevier} to practical underdetermined support recovery problems are  limited. 
\subsection{Contribution of this article}
{ Residual ratio thresholding (RRT)\cite{icml,robust,mos}    is a concept recently introduced  to perform sparse variable selection in linear regression models without the \textit{a priori} knowledge of nuisance parameters like noise variance, sparsity level etc. This concept was initially developed to operate support recovery algorithms like OMP, orthogonal least squares (OLS) etc, in underdetermined linear regression models with explicit finite SNR and finite sample guarantees \cite{icml}. Later, this concept was  extended to outlier detection problems in  robust regression \cite{robust} and model order selection in overdetermined linear regresssion \cite{mos}.   A significant drawback of RRT in the context of support recovery in underdetermined regression models (as we establish in this article) is that it is inconsistent at high SNR. In other words, inspite of having a decent finite SNR performance, RRT is suboptimal in the high SNR regime. } In this article, we propose two variants of RRT, \textit{viz.}, residual ratio minimization (RRM) and residual ratio thresholding with adaptation (RRTA) to  operate algorithms like OMP, OLS etc. without the \textit{a priori} knowledge of $k_0$ or $\sigma^2$.  Unlike RRT, these two schemes are shown to be high SNR consistent both analytically and numerically.  In addition to HSC which is an asymptotic result, we also derive finite sample and finite SNR support recovery guarantees for RRM based on RIP. These support recovery results indicate that the SNR required for successfull support recovery using RRM increases with the dynamic range of $\boldsymbol{\beta}$ given by $DR(\boldsymbol{\beta})=\dfrac{\boldsymbol{\beta}_{max}=\underset{j \in \mathcal{S}}{\max}|\boldsymbol{\beta}_j|}{\boldsymbol{\beta}_{min}=\underset{j \in \mathcal{S}}{\min}|\boldsymbol{\beta}_j|}$, whereas, numerical simulations indicate that the SNR required by RRTA (like RRT and OMP with \textit{a priori} knowledge of $\sigma^2$ or $k_0$) depends only on the minimum non zero value $\boldsymbol{\beta}_{min}$. Consequently, the finite SNR utility of RRM is limited to wireless communication applications like \cite{kallummil2016combining} where $DR(\boldsymbol{\beta})$ is close to one. In contrast to RRM, RRTA is useful in both finite and  high SNR applications irrespective of the dynamic range of $\boldsymbol{\beta}$. 
\subsection{Organization of this article}
Section \rom{2} presents the existing results  on OMP. Section \rom{3} introduces RRT and develope  RRM and RRTA techniques along with their analytical guarantees.   Section \rom{4} presents numerical simulations.      
\section{High SNR consistency of  OMP with \textit{a priori} knowledge of $\sigma^2$ or $k_0$}
\begin{table}
\begin{tabular}{|l|}
\hline
{\bf Input:} Observation ${\bf y}$, design matrix ${\bf X}$ and stopping condition.\\
  {\bf Step 1:-} Initialize the residual ${\bf r}^{(0)}={\bf y}$. \\ \ \ \ \ \ \ \ \ $\hat{\boldsymbol{\beta}}={\bf 0}_p$,    Support estimate ${\mathcal{S}_0}=\emptyset$, Iteration counter $k=1$; \\
 {\bf Step 2:-} Update support estimate: ${\mathcal{S}_k}={\mathcal{S}_{k-1}}\cup t^k$, \\ \ \ \ \ \ \ \ \ \ where  $t^k=\underset{t \in [p]}{\arg\max}|{\bf X}_t^T{\bf r}^{k-1}|.$ \\
 {\bf Step 4:-} Estimate $\boldsymbol{\beta}$ using current support:\\ $ \ \ \ \ \ \ \ \ \ \hat{\boldsymbol{\beta}}(\mathcal{S}_k)={\bf X}_{\mathcal{S}_k}^{\dagger}{\bf y}$. \\
  {\bf Step 5:-} Update residual: ${\bf r}^{k}={\bf y}-{\bf X}\hat{\boldsymbol{\beta}}=({\bf I}_n-{\bf P}_{k}){\bf y}$. \\
  
\ \ \ \ \ \ \ \ \ \ \ \ \ \ \   ${\bf P}_k={\bf X}_{\mathcal{S}_k}{\bf X}_{\mathcal{S}_k}^{\dagger}$. \\
  {\bf Step 6:-} Increment $k$. $k \leftarrow k+1$. \\
  {\bf Step 7:-} Repeat Steps 2-6, until the stopping condition  is  satisfied. \\
  {\bf Output:-} Support estimate $\hat{\mathcal{S}}=\mathcal{S}_k$ and signal estimate $\hat{\boldsymbol{\beta}}$. \\
 \hline
\end{tabular}
\caption{ OMP algorithm.}
\label{tab:omp}
\end{table}
OMP \cite{tropp2004greed} in TABLE \ref{tab:omp} is a widely used greedy and iterative sparse support recovery algorithm.   OMP  algorithm starts with a null set as support estimate and observation ${\bf y}$ as the initial residual. At each iteration, OMP identifies the column  that is the most correlated with the current residual ($t^k=\underset{j}{\arg\max}|{\bf X}_j^T{
\bf r}^{k-1}|$) and expand the support estimate by including this selected column index $(\mathcal{S}_k=\mathcal{S}_{k-1} \cup t^k)$. Later, the residual is updated by projecting the observation vector ${\bf y}$ orthogonal to the column space produced by the current support estimate (i.e., $col({\bf X}_{\mathcal{S}_k})$).  Since ${\bf r}^k$ is orthogonal to the column space of ${\bf X}_{\mathcal{S}_k}$, ${\bf X}_t^T{\bf r}^k=0$ for all $t\in \mathcal{S}_k$. Consequently, an index selected in an initial stage will not be selected again later. Consequently, the support estimate sequence monotonically increases with iteration $k$, i.e., $\mathcal{S}_k\subset \mathcal{S}_{k+1}$ and $card(\mathcal{S}_k)=k$.  
\begin{remark}
OLS iterations are also similar to that of OMP except that  OLS select the column that results in the maximum decrease  in residual energy $\|{\bf r}^k\|_2^2$, i.e., $t^k=\underset{j}{\arg\min}\|({\bf I}_n-{\bf P}_{\mathcal{S}_{k-1}\cup j}){\bf y}\|_2^2$. OLS support estimate sequence also satisifes $\mathcal{S}_k\subset \mathcal{S}_{k+1}$ and $card(\mathcal{S}_k)=k$. The techniques developed in this article will be discussed using OMP algorithm. However, please note that these techniques are equally applicable to OLS also. 
\end{remark}
The  iterations in OMP are continued  until a user defined stopping condition is met.  The performance of OMP  depends crucially on this stopping condition.  When the sparsity level $k_0$ is known \textit{a priori},  many articles suggest stopping OMP exactly after $k_0$ iterations. When $k_0$ is unknown \textit{ a priori}, one can stop OMP when the residual power $\|{\bf r}^k\|_2$ is sufficiently small. Two such residual based stopping conditions are popular in literature\cite{cai2011orthogonal}. One rule proposes to stop OMP iterations once the residual power drops below $\|{\bf r}^k\|_2\leq \|{\bf w}\|_2$, whereas, another rule proposes to stop OMP when the residual correlation drops below $\|{\bf X}^T{\bf r}^k\|_{\infty}\leq \|{\bf X}^T{\bf w}\|_{\infty}$. When  ${\bf w} \sim \mathcal{N}({\bf 0}_n,\sigma^2{\bf I}_n)$ and the columns ${\bf X}_j$ have unit $l_2$ norm, it was shown in \cite{cai2011orthogonal} that 
\begin{equation}
\begin{array}{ll}
\mathbb{P}\left(\|{\bf w}\|_2\geq \sigma\sqrt{n+2\sqrt{n\log(n)}}\right)\leq 1/n
\ \text{and}\\ 
\mathbb{P}\left(\|{\bf X}^T{\bf w}\|_{\infty}\geq \sigma\sqrt{2\log(p)}\right)\leq 1/p.
\end{array}
\end{equation}
Consequently, one can stop OMP iterations in Gaussian noise once $\|{\bf r}^k\|_2\leq \sigma\sqrt{n+2\sqrt{n\log(n)}}$ or  $\|{\bf X}^T{\bf w}\|_{\infty}\leq \sigma\sqrt{2\log(p)}$.

A number of deterministic recovery guarantees are proposed for OMP. Among these guarantees, the conditions based on restricted isometry constants (RIC) are the most popular for OMP.  RIC  of order $j$ denoted by $\delta_j$ is defined as the smallest value of $\delta$ such that 
\begin{equation} 
(1-\delta)\|{\bf b}\|_2^2\leq \|{\bf X}{\bf b}\|_2^2\leq (1+\delta)\|{\bf b}\|_2^2
\end{equation}
 hold true for all ${\bf b} \in \mathbb{R}^p$ with $\|{\bf b}\|_0=card(supp({\bf b}))\leq j$. A smaller value of $\delta_j$ implies that ${\bf X}$ act as a near orthogonal matrix for all $j$ sparse vectors ${\bf b}$. Such a situation is ideal for the recovery of a $j$-sparse vector ${\bf b}$ using any sparse recovery technique. The latest RIC based finite SNR support recovery guarantee and HSC results for OMP are given in Lemma \ref{lemma:latest_omp}. 
\begin{lemma}\label{lemma:latest_omp}
 Suppose that the matrix ${\bf X}$ satisfies $\delta_{k_0+1}<{1}/{\sqrt{k_0+1}}$. Then, \\
1).  OMP with $k_0$ iterations or stopping condition $\|{\bf r}^k\|_2\leq \|{\bf w}\|_2$ can recover any $k_0$ sparse vector $\boldsymbol{\beta}$ once $\|{\bf w}\|_2\leq  \epsilon_{omp}=\boldsymbol{\beta}_{min}\sqrt{1-\delta_{k_0+1}}\left[\dfrac{1-\sqrt{k_0+1}\delta_{k_0+1}}{1+\sqrt{1-\delta_{k_0+1}^2}-\sqrt{k_0+1}\delta_{k_0+1}}\right]$ \cite{latest_omp}. \\
2). Define $\epsilon_{\sigma}=\sigma\sqrt{n+2\sqrt{n\log(n)}}$. Then, OMP with $k_0$ iterations or stopping condition $\|{\bf r}^k\|_2\leq \epsilon_{\sigma}$ can recover any $k_0$ sparse vector $\boldsymbol{\beta}$ with a probability greater than $1-1/n$ once $\epsilon_{\sigma}\leq  \epsilon_{omp}$. \\
3). OMP running precisely $k_0$ iterations is high SNR consistent, i.e., $\underset{\sigma^2\rightarrow 0}{\lim}\mathbb{P}(\mathcal{S}_{k_0}=\mathcal{S})=1$ \cite{elsevier}. \\
4). OMP with stopping rule $\|{\bf r}^k\|_2 \leq \sigma g(\sigma) $ is HSC iff $\underset{\sigma^2\rightarrow 0}{\lim}g(\sigma)=\infty$ and  $\underset{\sigma^2\rightarrow 0}{\lim}\sigma g(\sigma)=0$ \cite{elsevier}. \\
5). OMP with stopping rule $\|{\bf X}^T{\bf r}^k\|_{\infty} \leq \sigma g(\sigma) $ is HSC iff $\underset{\sigma^2\rightarrow 0}{\lim}g(\sigma)=\infty$ and  $\underset{\sigma^2\rightarrow 0}{\lim}\sigma g(\sigma)=0$ \cite{elsevier}.

\end{lemma}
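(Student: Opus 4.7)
The five claims decouple: parts 1 and 2 are finite-sample RIP guarantees, while parts 3--5 are asymptotic HSC statements. I treat part 1 as a quote from \cite{latest_omp}, derive part 2 as a corollary by combining part 1 with the Gaussian norm bound in (2), and handle parts 3--5 under a common template that reduces HSC to showing the stopping rule (a) does not halt before iteration $k_0$ and (b) does halt at iteration $k_0$ with probability tending to one.

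\textbf{Part 2.} Work on the event $E=\{\|{\bf w}\|_2\leq\epsilon_\sigma\}$, which (2) assigns probability at least $1-1/n$. Under the hypothesis $\epsilon_\sigma\leq\epsilon_{omp}$, part 1 applied with the deterministic bound $\|{\bf w}\|_2\leq\epsilon_{omp}$ already guarantees exact support recovery with $k_0$ iterations. For the alternative stopping rule $\|{\bf r}^k\|_2\leq\epsilon_\sigma$, I need to verify it triggers at exactly $k=k_0$: at $k_0$ one has $\mathcal{S}_{k_0}=\mathcal{S}$ and $\|{\bf r}^{k_0}\|_2=\|({\bf I}-{\bf P}_\mathcal{S}){\bf w}\|_2\leq\|{\bf w}\|_2\leq\epsilon_\sigma$, so the rule fires; for $k<k_0$ the RIP-based lower bound on $\|{\bf r}^k\|_2$ that is implicit in the proof of part 1 in \cite{latest_omp} exceeds $\epsilon_\sigma$ under the stated hypotheses, so the rule does not fire prematurely.

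\textbf{Parts 3--5.} The unifying observation is that once $\mathcal{S}_{k_0}=\mathcal{S}$, the residual equals $({\bf I}_n-{\bf P}_\mathcal{S}){\bf w}$, whose $\ell_2$ norm concentrates at order $\sigma\sqrt{n}$ and whose correlation $\|{\bf X}^T({\bf I}_n-{\bf P}_\mathcal{S}){\bf w}\|_\infty$ concentrates at order $\sigma\sqrt{\log p}$; for $k<k_0$ both quantities retain a deterministic signal contribution of order $\boldsymbol{\beta}_{\min}$ (via RIP). Thus HSC requires the stopping threshold to (i) eventually exceed the pure-noise level and (ii) eventually fall below the signal-plus-noise level. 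For the $\ell_2$ rule $\sigma g(\sigma)$, (i) gives $\sigma g(\sigma)\gg\sigma$, i.e.\ $g(\sigma)\to\infty$, and (ii) gives $\sigma g(\sigma)\to 0$; for the $\ell_\infty$ rule the reference scale differs by a $\sqrt{\log p}$ factor but, since $p$ is fixed, the same two conditions emerge. Sufficiency follows by union-bounding the two events via (2). For necessity I would exhibit, along a violating subsequence of $\sigma$'s, either a noise realization that keeps the residual above the threshold after $k_0$ iterations (over-selection when $g(\sigma)\not\to\infty$) or a signal $\boldsymbol{\beta}$ whose smallest entry is tuned so that the threshold is crossed before iteration $k_0$ (under-selection when $\sigma g(\sigma)\not\to 0$).

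\textbf{Main obstacle.} The genuinely delicate arguments sit in part 1's RIP analysis and in the necessity direction of parts 4--5. For the latter, the construction must simultaneously preserve the RIP assumption that sufficiency relies on, match the adversarial signal to the specific failure mode dictated by which of the two limits is violated, and certify that the failure probability is bounded away from zero along the subsequence rather than merely positive for a single realization. The sufficiency direction, by contrast, is essentially bookkeeping on top of (2) and part 1.
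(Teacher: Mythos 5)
The paper does not actually prove this lemma: it is a background statement assembled verbatim from the cited references, with part 1 (and the finite-SNR machinery behind part 2) taken from \cite{latest_omp} and parts 3--5 taken from \cite{elsevier}. So there is no in-paper argument to compare yours against line by line; the relevant comparison is whether your sketch is a faithful reconstruction of how those references establish the results, and on that score it is. Your derivation of part 2 from part 1 plus the Gaussian tail bound in (2) is exactly the intended route, and your observation that part 3 follows by letting $\sigma^2\rightarrow 0$ so that $\mathbb{P}(\|{\bf w}\|_2\leq \epsilon_{omp})\rightarrow 1$ is the same one the paper itself reuses later (e.g.\ in Appendix A, where it invokes precisely this consequence of the lemma). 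Your template for parts 4--5 --- pure-noise residual of order $\sigma$ versus a signal floor of order $\boldsymbol{\beta}_{\min}$ for $k<k_0$, yielding $g(\sigma)\rightarrow\infty$ and $\sigma g(\sigma)\rightarrow 0$ as the two sides of the iff --- is the standard argument in \cite{elsevier}.

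Two small cautions. First, in part 2 your claim that for $k<k_0$ ``the RIP-based lower bound on $\|{\bf r}^k\|_2$ implicit in part 1 exceeds $\epsilon_\sigma$'' needs the hypothesis $\epsilon_\sigma\leq\epsilon_{omp}$ applied to the \emph{threshold} $\epsilon_\sigma$ itself, not merely to $\|{\bf w}\|_2$: a stopping rule with a threshold larger than $\|{\bf w}\|_2$ can in principle fire earlier than the rule $\|{\bf r}^k\|_2\leq\|{\bf w}\|_2$, which is why the lemma's condition is stated as $\epsilon_\sigma\leq\epsilon_{omp}$ rather than $\|{\bf w}\|_2\leq\epsilon_{omp}$. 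You gesture at this but should make it explicit. Second, the necessity direction of parts 4--5 is, as you say, the delicate piece; your proposal correctly identifies what must be constructed but does not carry it out, and neither does the paper --- both defer to \cite{elsevier}. Given that the lemma is explicitly a citation, this deferral is acceptable, but be aware that your writeup proves strictly less than it would need to if the lemma were original to this paper.
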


  Lemma \ref{lemma:latest_omp} implies that OMP with the \textit{a priori} knowledge of $k_0$ or $\sigma^2$ can recover support $\mathcal{S}$ once the matrix satisfies the regularity condition  $\delta_{k_0+1}<{1}/{\sqrt{k_0+1}}$ and the SNR is sufficiently high.  Lemma \ref{lemma:latest_omp} also implies that OMP with \textit{ a priori} knowledge of $k_0$ is always HSC. Further, stopping conditions $\|{\bf r}^{(k)}\|_2<\sigma\sqrt{n+2\sqrt{n\log(n)}}$ \cite{cai2011orthogonal,omp_rip_noise} or $\|{\bf X}^T{\bf r}^{(k)}\|_{\infty}<\sigma \sqrt{2\log(p)}$\cite{cai2011orthogonal} which fail to satisfy 4) and 5) of Lemma \ref{lemma:latest_omp}  are inconsistent at high SNR.  


\section{Residual ratio techniques}
As one can see from Lemma \ref{lemma:latest_omp},  good finite SNR support recovery guarantees and HSC using OMP require either the \textit{ a priori} knowledge of  $k_0$ or $\sigma^2$. However,  as mentioned earlier, both $k_0$ and $\sigma^2$ are not available in most practical applications.  Recently, we demonstrated in \cite{icml} that one can achieve high quality support recovery using OMP  without the \textit{a priori} knowledge of $\sigma^2$ or $k_0$ by using the properties of residual ratio statistic defined by $RR(k)=\dfrac{\|{\bf r}^k\|_2}{\|{\bf r}^{k-1}\|_2}$, where ${\bf r}^k=({\bf I}_n-{\bf P}_{k}){\bf y}$ is the residual corresponding to OMP support at the $k^{th}$ iteration, i.e., $\mathcal{S}_k$.  The technique developed in \cite{icml} was based on the behaviour of $RR(k)$ for $k=1,2,\dotsc,k_{max}$, where $k_{max}\geq k_0$ is a fixed quantity independent of data. $k_{max}$ is a measure of the maximum sparsity level  expected in a support recovery  experiment.  Since  the maximum sparsity level upto which support recovery can be guaranteed for any sparse recovery algorithm (not just OMP) is $\floor{\dfrac{n+1}{2}}$, \cite{icml} suggests fixing $k_{max}=\floor{\dfrac{n+1}{2}}$. Note that this is a fixed value that is independent of the data and the algorithm (OMP or OLS) under consideration.  The residual ratio statistic   has many interesting properties as derived in \cite{icml}. Since the support sequence is monotonic i.e., $\mathcal{S}_k\subset \mathcal{S}_{k+1}$,  the residual ${\bf r}^k$ is obtained by projecting ${\bf y}$ onto a subspace of decreasing dimension. Hence, $\|{\bf r}^{k+1}\|_2 \leq \|{\bf r}^k\|_2$ which inturn implies  that $0\leq RR(k)\leq 1$. Please note that while residual norms are monotonically decreasing, residual ratios $RR(k)$ are not monotonic in $k$.  A number of properties regarding the residual ratio statistic are based on the concept of minimal superset.     

{\bf Definition 2:-} 
The minimal superset in the OMP support sequence $\{\mathcal{S}_{k}\}_{k=1}^{k_{max}}$ is  given by $\mathcal{S}_{k_{min}}$, 
where $k_{min}=\min\left(\{k:\mathcal{S}\subseteq \mathcal{S}_k\}\right)$. When the set $\{k:\mathcal{S}\subseteq \mathcal{S}_k\}=\emptyset$, we set $k_{min}=\infty$ and $\mathcal{S}_{k_{min}}=\phi$. 

In words, minimal superset is the smallest superset of  support $\mathcal{S}$ present in a particular realization of the support estimate sequence $\{\mathcal{S}_{k}\}_{k=1}^{k_{max}}$. Note that both $k_{min}$ and $\mathcal{S}_{k_{min}}$ are unobservable random variables. Since $card(\mathcal{S}_k)=k$ and $card(\mathcal{S})=k_0$, $\mathcal{S}_{k}$ for $k<k_0$ cannot satisfy $\mathcal{S}\subseteq \mathcal{S}_k$ and hence  $k_{min}\geq k_0$. Further,  the  monotonicity of $\mathcal{S}_k$ implies that   $\mathcal{S} \subset \mathcal{S}_k$ for all $k\geq k_{min}$. \\
{\bf Case 1:-}  When $k_{min}=k_0$, then $\mathcal{S}_{k_0}=\mathcal{S}$ and $\mathcal{S}_{k}\supset \mathcal{S}$ for $k\geq k_0$, i.e., $\mathcal{S}$ is present in the solution path. Further, when $k_{min}=k_0$, it is true that $\mathcal{S}_k\subseteq \mathcal{S} $ for $k\leq k_0$. \\
{\bf Case 2:-} When $k_0<k_{min}\leq k_{max}$, then $\mathcal{S}_{k}\neq  \mathcal{S}$ for all $k$ and  $\mathcal{S}_{k}\supset \mathcal{S}$ for $k\geq k_{min}$, i.e., $\mathcal{S}$ is not present in the solution path. However, a superset of $\mathcal{S}$ is present. \\
{\bf Case 3:-} When $k_{min}=\infty$, then $\mathcal{S}_{k}\not \supseteq   \mathcal{S}$ for all $k$, i.e., neither $\mathcal{S}$ nor a superset of $\mathcal{S}$  is present in $\{\mathcal{S}_{k}\}_{k=1}^{k_{max}}$. \\
To summarize,   exact support recovery using any  OMP/OLS based scheme   is possible only if $k_{min}=k_0$. Whenever $k_{min}>k_0$, it is possible to estimate true support $\mathcal{S}$ without having any  false negatives. However, one then has to suffer from false positives. When $k_{min}=\infty$, any support in $\{\mathcal{S}_{k}\}_{k=1}^{k_{max}}$ has to suffer from false negatives and all supports $\mathcal{S}_{k}$ for $k>k_0-1$ has to suffer from false positives also.  Note that the matrix and SNR conditions required for exact support recovery  in statements 1) and 2) of Lemma \ref{lemma:latest_omp}  implies that $k_{min}=k_0$ and $\mathcal{S}_{k_0}=\mathcal{S}$ at high SNR. The main distributional properties of residual ratio statistic are stated in the following lemma \cite{icml}. 
\begin{lemma}\label{lemma:RR_properties}
1). Define $\Gamma_{RRT}^{\alpha}(k)=\sqrt{F^{-1}_{\frac{n-k}{2},\frac{1}{2}}\left(\dfrac{\alpha}{k_{max}(p-k+1)}\right)}$, where  $F^{-1}_{a,b}(.)$ is the inverse function of the CDF $F_{a,b}(.)$ of a Beta R.V $\mathbb{B}(a,b)$. $0<\alpha<1$ is a fixed quantity independent of the data. Then under no assumption on the matrix ${\bf X}$ and  for all $\sigma^2>0$, $RR(k)$ satisfies the following. 
\begin{equation}
\mathbb{P}(RR(k)>\Gamma_{RRT}^{\alpha}(k),\forall k>k_{min})\geq 1-\alpha.
\end{equation} 
2). Suppose that the design matrix ${\bf X}$ satisfies a regularity condition  which ensures that $k_{min}=k_0$ once $\|{\bf w}\|_2\leq \epsilon$ for some $\epsilon>0$ (for example, $\delta_{k_0+1}<1/\sqrt{k_0+1}$ and $\|{\bf w}\|_2\leq \epsilon_{omp}$ in Lemma \ref{lemma:latest_omp}).  Then, 
\begin{equation}
\mathbb{P}(k_{min}=k_0)=\mathbb{P}(\mathcal{S}_{k_{min}}=\mathcal{S})\rightarrow 1\ \text{as}\ \sigma^2\rightarrow 0 \ \text{and}
 \end{equation} 
 \begin{equation}
RR(k_0)\overset{P}{\rightarrow} 0\ \text{as}\ \sigma^2\rightarrow 0.
 \end{equation} 
\end{lemma}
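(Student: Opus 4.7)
For part 1, the plan begins with the structural observation that $k>k_{min}$ forces $\mathcal{S}\subseteq \mathcal{S}_{k-1}\subseteq \mathcal{S}_k$, which cancels the signal from both residuals, giving $\mathbf{r}^{k-1}=(\mathbf{I}_n-\mathbf{P}_{\mathcal{S}_{k-1}})\mathbf{w}$ and $\mathbf{r}^{k}=(\mathbf{I}_n-\mathbf{P}_{\mathcal{S}_k})\mathbf{w}$. For any deterministic $\mathcal{J}\supseteq \mathcal{S}$ with $|\mathcal{J}|=k-1$ and any $j\notin\mathcal{J}$, the Pythagorean identity $\|(\mathbf{I}_n-\mathbf{P}_{\mathcal{J}})\mathbf{w}\|_2^2=\|(\mathbf{I}_n-\mathbf{P}_{\mathcal{J}\cup j})\mathbf{w}\|_2^2+\|(\mathbf{P}_{\mathcal{J}\cup j}-\mathbf{P}_{\mathcal{J}})\mathbf{w}\|_2^2$, combined with the independence of the Gaussian projections onto the two orthogonal ranges (of dimensions $n-k$ and $1$), gives the key distributional identity $\|(\mathbf{I}_n-\mathbf{P}_{\mathcal{J}\cup j})\mathbf{w}\|_2^2 / \|(\mathbf{I}_n-\mathbf{P}_{\mathcal{J}})\mathbf{w}\|_2^2 \sim \mathbb{B}((n-k)/2,1/2)$. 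Importantly, the distribution depends only on the dimensions and not on the specific $\mathcal{J}$ or $j$.

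The threshold is chosen exactly so that $F_{(n-k)/2,1/2}(\Gamma_{RRT}^{\alpha}(k)^2)=\alpha/(k_{max}(p-k+1))$. The next step is to bound $\mathbb{P}(\exists k>k_{min}: RR(k)\le \Gamma_{RRT}^{\alpha}(k))$ by a union bound over $k\in[1,k_{max}]$ (giving the factor $k_{max}$) and, for each such $k$, over the realized new index $t^k\notin \mathcal{S}_{k-1}$ (giving the factor $p-k+1$), so that each constituent contributes exactly $\alpha/(k_{max}(p-k+1))$ and the total becomes $\alpha$. The main obstacle is that $\mathcal{S}_{k-1}$ itself is data-dependent, so the denominator $\|\mathbf{r}^{k-1}\|_2^2$ inside $RR(k)$ is not tied to a single fixed $\mathcal{J}$. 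The plan for overcoming this is to decompose across the disjoint events $\{\mathcal{S}_{k-1}=\mathcal{J}\}$ for $\mathcal{J}\supseteq \mathcal{S}$ of size $k-1$, and to argue that for each such $\mathcal{J}$, the $(p-k+1)$-term Beta union bound indexed by $j\notin\mathcal{J}$ controls the contribution; the parameters of the relevant Beta distribution depend only on the cardinalities of $\mathcal{J}$ and $\mathcal{J}\cup j$, not their identities, which is what allows the per-$\mathcal{J}$ count to collapse to $(p-k+1)$.

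Part 2 is more direct. The first claim $\mathbb{P}(k_{min}=k_0)\to 1$ is immediate from the hypothesis: the regularity condition forces $k_{min}=k_0$ on $\{\|\mathbf{w}\|_2\le\epsilon\}$, and since $\|\mathbf{w}\|_2^2/\sigma^2\sim\chi^2_n$, we have $\|\mathbf{w}\|_2\overset{p}{\rightarrow}0$ as $\sigma^2\to 0$, so $\mathbb{P}(\|\mathbf{w}\|_2\le\epsilon)\to 1$. For the convergence $RR(k_0)\overset{p}{\rightarrow}0$, the argument restricts to the high-probability event $\{k_{min}=k_0\}$, on which $\mathcal{S}_{k_0}=\mathcal{S}$ and $\mathcal{S}_{k_0-1}\subsetneq \mathcal{S}$. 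The numerator then satisfies $\|\mathbf{r}^{k_0}\|_2=\|(\mathbf{I}_n-\mathbf{P}_{\mathcal{S}})\mathbf{w}\|_2\le \|\mathbf{w}\|_2\overset{p}{\rightarrow}0$. For the denominator, writing $\mathbf{r}^{k_0-1}=(\mathbf{I}_n-\mathbf{P}_{\mathcal{S}_{k_0-1}})\mathbf{X}\boldsymbol{\beta}+(\mathbf{I}_n-\mathbf{P}_{\mathcal{S}_{k_0-1}})\mathbf{w}$, the regularity condition ensures the columns of $\mathbf{X}_{\mathcal{S}}$ are linearly independent, so $(\mathbf{I}_n-\mathbf{P}_{\mathcal{S}_{k_0-1}})\mathbf{X}\boldsymbol{\beta}\neq 0$ for every proper $(k_0-1)$-subset of $\mathcal{S}$; taking the minimum of this deterministic norm over the finite collection of such subsets yields a strictly positive lower bound, whence $\|\mathbf{r}^{k_0-1}\|_2$ stays bounded away from zero with probability tending to one, and the ratio $RR(k_0)$ converges to zero in probability.
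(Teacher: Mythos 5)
This lemma is quoted from \cite{icml}; the paper itself gives no proof, only the remark in Appendix A that, conditional on $k_{min}=j$, $RR(k)$ for $k>j$ is governed by a $\mathbb{B}(\frac{n-k}{2},\frac{1}{2})$ law. Your reconstruction follows exactly the route of that cited proof: signal cancellation for $k>k_{min}$, the Pythagorean/independence argument giving $\|({\bf I}_n-{\bf P}_{\mathcal{J}\cup j}){\bf w}\|_2^2/\|({\bf I}_n-{\bf P}_{\mathcal{J}}){\bf w}\|_2^2\sim\mathbb{B}(\frac{n-k}{2},\frac{1}{2})$ for deterministic $\mathcal{J}\supseteq\mathcal{S}$, and the union bound over $k$ and over the newly selected index that produces the $k_{max}(p-k+1)$ normalization inside $\Gamma_{RRT}^{\alpha}(k)$. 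Part 2 is handled the way the paper itself uses these facts in Appendices B--D: $\|{\bf w}\|_2\overset{P}{\rightarrow}0$, $\|{\bf r}^{k_0}\|_2\le\|{\bf w}\|_2$ on $\{k_{min}=k_0\}$, and a strictly positive floor on $\|{\bf r}^{k_0-1}\|_2$; your qualitative linear-independence bound is a sound substitute for the paper's quantitative $\sqrt{1-\delta_{k_0}}\boldsymbol{\beta}_{min}-\|{\bf w}\|_2$ bound, since every regularity condition in play forces ${\bf X}_{\mathcal{S}}$ to have full column rank.

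The one step you cannot leave at the level of a plan is the resolution of the data-dependence of $\mathcal{S}_{k-1}$ in Part 1. As written, decomposing over the disjoint events $\{\mathcal{S}_{k-1}=\mathcal{J}\}$ does not by itself collapse the count to $p-k+1$: if you bound each term $\mathbb{P}\bigl(\{\exists j\notin\mathcal{J}: Z_{\mathcal{J},j}\le\Gamma_{RRT}^{\alpha}(k)\}\cap\{\mathcal{S}_{k-1}=\mathcal{J}\}\bigr)$ by discarding the event $\{\mathcal{S}_{k-1}=\mathcal{J}\}$, the sum over the $\binom{p-k_0}{k-1-k_0}$ admissible supersets $\mathcal{J}$ inflates the bound by that combinatorial factor. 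To retain only the factor $p-k+1$ you must keep the conditioning and assert that, given $\{\mathcal{S}_{k-1}=\mathcal{J}\}\cap\{k>k_{min}\}$, each $Z_{\mathcal{J},j}$ still has (or is stochastically dominated by) the $\mathbb{B}(\frac{n-k}{2},\frac{1}{2})$ law --- and that conditioning event is itself a function of ${\bf w}$, so this is not automatic. The observation that ``the Beta parameters depend only on cardinalities'' is true but is not what makes the per-$\mathcal{J}$ count collapse; the conditional distributional claim is the real content of the step, and it is precisely the delicate point in the proof of Theorem 1 of \cite{icml} as well. Your proposal is therefore faithful to the source, but to be a complete proof it needs that conditional claim stated and justified (or replaced by another device) rather than gestured at.
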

Lemma \ref{lemma:RR_properties} implies that under appropriate matrix conditions and sufficiently high SNR, $RR(k)$ for $k>k_0$ will be higher than the positive quantity $\Gamma_{RRT}^{\alpha}(k)$ with a high probability, whereas, $RR(k_0)$ will be smaller than $\Gamma_{RRT}^{\alpha}(k_0)$. Consequently, the last index for which $RR(k)<\Gamma_{RRT}^{\alpha}(k)$ will be equal to the sparsity level $k_0$. This motivates the RRT support estimate given by
\begin{equation}
\mathcal{S}_{RRT}=\mathcal{S}_{k_{RRT}} \ \text{where} \ k_{RRT}=\max\{k: RR(k)<\Gamma_{RRT}^{\alpha}(k)\}.
\end{equation}  
The performance guarantees for RRT are stated in Lemma \ref{lemma:icml}. 
\begin{lemma}\label{lemma:icml}
Let $k_{max}\geq k_0$ and suppose that the matrix ${\bf X}$ satisfies $\delta_{k_0+1}<\frac{1}{\sqrt{k_0+1}}$. Then \cite{icml},\\
1). RRT can recover the true support $\mathcal{S}$ with probability greater than $1-1/n-\alpha$ provided that $\epsilon_{\sigma}<\min(\epsilon_{omp},\epsilon_{rrt})$, where $\epsilon_{omp}$ is given in Lemma \ref{lemma:latest_omp} and
\begin{equation}
\epsilon_{rrt}=\dfrac{\Gamma_{RRT}^{\alpha}(k_0)\sqrt{1-\delta_{k_{0}}}\boldsymbol{\beta}_{min}}
{1+\Gamma_{RRT}^{\alpha}(k_0)}.
\end{equation}
2). $\underset{\sigma^2 \rightarrow 0}{\lim}\mathbb{P}(\mathcal{S}_{k_{RRT}}\neq \mathcal{S})\leq \alpha.$ \\
3). $\mathbb{P}(\mathcal{S}_{k_{RRT}}\supseteq \mathcal{S})\leq \alpha$ in the moderate to high  SNR regime (empirical result).  
\end{lemma}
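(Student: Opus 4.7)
The plan is to prove parts (1) and (2) rigorously from Lemmas \ref{lemma:latest_omp} and \ref{lemma:RR_properties}, and treat part (3) as an empirical claim whose intuition follows from part (1) of Lemma \ref{lemma:RR_properties}. The overall architecture is: on a high-probability event the noise is small, so (i) OMP's solution path hits the true support at exactly iteration $k_0$ (i.e.\ $k_{min}=k_0$), (ii) the residual ratio at $k_0$ is tiny while those for $k>k_0$ lie above $\Gamma_{RRT}^{\alpha}(k)$, and hence $k_{RRT}=k_0$.

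For part (1), I would first collect the two high-probability events and then reduce $\mathcal{S}_{k_{RRT}}=\mathcal{S}$ to a deterministic condition on $RR(k_0)$. Event $E_1=\{\|{\bf w}\|_2\leq\epsilon_\sigma\}$ has probability at least $1-1/n$ by (2); on $E_1\cap\{\epsilon_\sigma\leq\epsilon_{omp}\}$, Lemma \ref{lemma:latest_omp}(1) yields $\mathcal{S}_{k_0}=\mathcal{S}$, so in particular $k_{min}=k_0$. Event $E_2=\{RR(k)>\Gamma_{RRT}^{\alpha}(k)\;\forall k>k_{min}\}$ has probability at least $1-\alpha$ by Lemma \ref{lemma:RR_properties}(1) with no matrix assumptions. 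On $E_1\cap E_2$, $k_{RRT}=k_0$ iff $RR(k_0)<\Gamma_{RRT}^{\alpha}(k_0)$, which I then need to secure deterministically from $\epsilon_\sigma\leq\epsilon_{rrt}$.

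The crux is the deterministic two-sided bound on $RR(k_0)$ on the event $E_1\cap\{\mathcal{S}_{k_0}=\mathcal{S}\}$. The numerator is easy: since ${\bf X}\boldsymbol{\beta}\in col({\bf X}_{\mathcal{S}_{k_0}})$, we get ${\bf r}^{k_0}=({\bf I}_n-{\bf P}_{\mathcal{S}}){\bf w}$ and therefore $\|{\bf r}^{k_0}\|_2\leq\|{\bf w}\|_2\leq\epsilon_\sigma$. For the denominator, note that $\mathcal{S}_{k_0-1}\subset\mathcal{S}$ with exactly one missing index $j\in\mathcal{S}$; projecting ${\bf X}\boldsymbol{\beta}$ kills the contributions from $\mathcal{S}_{k_0-1}$ and leaves $({\bf I}_n-{\bf P}_{\mathcal{S}_{k_0-1}}){\bf X}\boldsymbol{\beta}=\boldsymbol{\beta}_j({\bf I}_n-{\bf P}_{\mathcal{S}_{k_0-1}}){\bf X}_j$. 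By constructing the $k_0$-sparse vector $\mathbf{b}$ with $\mathbf{b}_j=1$ and whose entries on $\mathcal{S}_{k_0-1}$ are the negative coefficients of the projection ${\bf P}_{\mathcal{S}_{k_0-1}}{\bf X}_j={\bf X}_{\mathcal{S}_{k_0-1}}\mathbf{v}$, we have ${\bf X}\mathbf{b}=({\bf I}_n-{\bf P}_{\mathcal{S}_{k_0-1}}){\bf X}_j$ and $\|\mathbf{b}\|_2^2=1+\|\mathbf{v}\|_2^2\geq 1$, so the RIP bound $\delta_{k_0}<\delta_{k_0+1}<1/\sqrt{k_0+1}$ gives $\|({\bf I}_n-{\bf P}_{\mathcal{S}_{k_0-1}}){\bf X}_j\|_2\geq\sqrt{1-\delta_{k_0}}$. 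The reverse triangle inequality then yields $\|{\bf r}^{k_0-1}\|_2\geq\boldsymbol{\beta}_{min}\sqrt{1-\delta_{k_0}}-\epsilon_\sigma$. Dividing and solving $\epsilon_\sigma/(\boldsymbol{\beta}_{min}\sqrt{1-\delta_{k_0}}-\epsilon_\sigma)<\Gamma_{RRT}^{\alpha}(k_0)$ for $\epsilon_\sigma$ produces exactly the threshold $\epsilon_{rrt}$ displayed in the lemma. A union bound over the complements of $E_1$ and $E_2$ closes part (1) with the $1-1/n-\alpha$ guarantee.

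For part (2), I would let $\sigma^2\to 0$ in the same picture, but use the asymptotic half of Lemma \ref{lemma:RR_properties} instead of the finite-SNR bound. Concretely, Lemma \ref{lemma:RR_properties}(2) gives $\mathbb{P}(k_{min}=k_0)\to 1$ and $RR(k_0)\overset{P}{\to}0$; because $\Gamma_{RRT}^{\alpha}(k_0)>0$ is a fixed positive constant independent of $\sigma^2$, we have $\mathbb{P}(RR(k_0)<\Gamma_{RRT}^{\alpha}(k_0))\to 1$. Meanwhile $\mathbb{P}(E_2)\geq 1-\alpha$ uniformly in $\sigma^2$. Intersecting these three events and taking $\sigma^2\to 0$ forces $\liminf_{\sigma^2\to 0}\mathbb{P}(\mathcal{S}_{k_{RRT}}=\mathcal{S})\geq 1-\alpha$, which is the claimed $\limsup_{\sigma^2\to 0}\mathbb{P}(\mathcal{S}_{k_{RRT}}\neq\mathcal{S})\leq\alpha$. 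Part (3) is explicitly flagged as an empirical observation; I would simply note that on the moderate-to-high SNR regime the event $k_{min}=k_0$ holds with high probability, and then Lemma \ref{lemma:RR_properties}(1) directly caps $\mathbb{P}(k_{RRT}>k_{min})$ by $\alpha$, which translates to a false-positive bound of $\alpha$ on $\mathcal{S}_{k_{RRT}}\supsetneq\mathcal{S}$, matching the numerical finding.

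The main obstacle is the denominator estimate $\|{\bf r}^{k_0-1}\|_2\geq\boldsymbol{\beta}_{min}\sqrt{1-\delta_{k_0}}-\epsilon_\sigma$; everything else is either a direct invocation of Lemma \ref{lemma:latest_omp} or Lemma \ref{lemma:RR_properties}, or a routine union bound. The $\sqrt{1-\delta_{k_0}}$ factor must be extracted by the right choice of a $k_0$-sparse test vector living on $\mathcal{S}_{k_0-1}\cup\{j\}$, and this is the only place the RIP hypothesis enters the argument for $\epsilon_{rrt}$.
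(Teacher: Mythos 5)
The paper does not actually prove Lemma \ref{lemma:icml}: it is quoted from \cite{icml} with no argument given, so there is no in-paper proof to compare against. That said, your proposal is correct, and it is built from exactly the ingredients the paper itself uses for the analogous RRM results: the event decomposition (small noise $\Rightarrow k_{min}=k_0$; $RR(k_0)$ small; $RR(k)>\Gamma_{RRT}^{\alpha}(k)$ for $k>k_{min}$ with probability $1-\alpha$) mirrors the events $\mathcal{A}_1,\mathcal{A}_3,\mathcal{A}_4$ of Appendix~C, and your two-sided bound $\|{\bf r}^{k_0}\|_2\leq\|{\bf w}\|_2$, $\|{\bf r}^{k_0-1}\|_2\geq\sqrt{1-\delta_{k_0}}\,\boldsymbol{\beta}_{min}-\|{\bf w}\|_2$ is precisely the specialization of (\ref{Caibound}) used there, leading to the same threshold $\epsilon_{rrt}$ after solving $\|{\bf w}\|_2/(\sqrt{1-\delta_{k_0}}\boldsymbol{\beta}_{min}-\|{\bf w}\|_2)<\Gamma_{RRT}^{\alpha}(k_0)$. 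The one place you go beyond what the paper writes down is the explicit $k_0$-sparse test vector ${\bf b}$ supported on $\mathcal{S}_{k_0-1}\cup\{j\}$ used to extract $\|({\bf I}_n-{\bf P}_{\mathcal{S}_{k_0-1}}){\bf X}_j\|_2\geq\sqrt{1-\delta_{k_0}}$; the paper simply cites Lemma~2 of \cite{latest_omp} for this projection bound (their (\ref{aaa2})), so your construction is a correct self-contained substitute. Your reading of part (3) as a bound on the false-discovery event $\mathcal{S}_{k_{RRT}}\supsetneq\mathcal{S}$ is the right one (the $\supseteq$ in the statement must be a strict superset, since otherwise part (3) would contradict part (2) at high SNR), and treating it as an empirical claim with the $k_{RRT}>k_{min}$ heuristic is consistent with how the paper discusses it. No gaps.
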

Lemma \ref{lemma:icml} implies that RRT can identify the true support $\mathcal{S}$ under the same set of matrix conditions required by OMP with \textit{a priori} knowledge of $k_0$ or $\sigma^2$, albeit at a slightly higher SNR. Further,  the probability of support recovery error is upper bounded by $\alpha$ at high SNR. Even in the low to moderately high SNR regime, empirical results indicate that the probability of false discovery (i.e., $card(\mathcal{S}_{k_{RRT}}/\mathcal{S})>0$) is upper bounded by $\alpha$. Hence, in RRT, the hyper parameter $\alpha$ has an operational interpretation of being the high SNR support recovery error and finite SNR false discovery error.  However, no lower bound on the  probability of support recovery error at high SNR is reported in \cite{icml}. In the following lemma, we establish a novel  high SNR  lower bound on the  probability of support recovery error for RRT. 
  \begin{lemma}\label{lemma:inconsistency}
Suppose that the design matrix ${\bf X}$ satisfies the RIC condition $\delta_{k_0+1}<1/\sqrt{k_0+1}$, MIC or the ERC. Then 
\begin{equation}
\underset{\sigma^2\rightarrow 0}{\lim}\mathbb{P}(\mathcal{S}_{RRT}\supset \mathcal{S})\geq \dfrac{\alpha}{k_{max}(p-k_0)}.
\end{equation}
  \end{lemma}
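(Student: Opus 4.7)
The plan is to reduce the event $\{\mathcal{S}_{RRT}\supset\mathcal{S}\}$ at high SNR to a single residual-ratio undershoot at iteration $k_0+1$, and then to lower bound that undershoot probability by a Beta-distribution tail whose value is exactly the target $\alpha/(k_{max}(p-k_0))$.

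\textbf{Step 1 (high-SNR reduction).} Under any of the stated regularity conditions, Lemma \ref{lemma:RR_properties} part 2) gives $\mathbb{P}(k_{min}=k_0)\rightarrow 1$ as $\sigma^2\rightarrow 0$, and on $\{k_{min}=k_0\}$ one has $\mathcal{S}_{k_0}=\mathcal{S}$. Hence $\mathcal{S}_{RRT}\supset\mathcal{S}$ strictly is equivalent to $k_{RRT}\geq k_0+1$, and the latter is implied by the single-index undershoot $RR(k_0+1)<\Gamma_{RRT}^{\alpha}(k_0+1)$. It therefore suffices to establish
\begin{equation*}
\liminf_{\sigma^2\to 0}\mathbb{P}\!\left(RR(k_0+1)<\Gamma_{RRT}^{\alpha}(k_0+1),\ k_{min}=k_0\right)\;\geq\;\dfrac{\alpha}{k_{max}(p-k_0)}.
\end{equation*}

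\textbf{Step 2 (fixed-$s$ Beta tail).} For each candidate $s\in[p]\setminus\mathcal{S}$, introduce the oracle one-step residual ratio
\begin{equation*}
\tilde{RR}_s\;=\;\dfrac{\|(\mathbf{I}_n-\mathbf{P}_{\mathcal{S}\cup\{s\}})\mathbf{w}\|_2}{\|(\mathbf{I}_n-\mathbf{P}_{\mathcal{S}})\mathbf{w}\|_2}.
\end{equation*}
On $\{k_{min}=k_0\}$ the denominator noise $(\mathbf{I}_n-\mathbf{P}_{\mathcal{S}})\mathbf{w}$ is an isotropic Gaussian on an $(n-k_0)$-dimensional subspace, and the same ratio-of-chi-squares computation that underlies the threshold $\Gamma_{RRT}^{\alpha}(k)$ in Lemma \ref{lemma:RR_properties} part 1) gives $\tilde{RR}_s^{2}\sim\mathbb{B}((n-k_0-1)/2,1/2)$ for every fixed $s$, with a distribution free of $\sigma^2$. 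Consequently $\mathbb{P}(\tilde{RR}_s<\Gamma_{RRT}^{\alpha}(k_0+1)\mid k_{min}=k_0)=\alpha/(k_{max}(p-k_0))$.

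\textbf{Step 3 (single-index lower bound).} Writing $\Gamma:=\Gamma_{RRT}^{\alpha}(k_0+1)$ and observing that the OMP step at iteration $k_0+1$ selects some $t^{k_0+1}\in[p]\setminus\mathcal{S}$ with $RR(k_0+1)=\tilde{RR}_{t^{k_0+1}}$, we have
\begin{equation*}
\mathbb{P}(RR(k_0+1)<\Gamma\mid k_{min}=k_0)\;=\;\sum_{s\in[p]\setminus\mathcal{S}}\mathbb{P}(t^{k_0+1}=s\mid \tilde{RR}_s<\Gamma,\,k_{min}=k_0)\cdot\dfrac{\alpha}{k_{max}(p-k_0)},
\end{equation*}
so the claim reduces to showing $\sum_{s}\mathbb{P}(t^{k_0+1}=s\mid \tilde{RR}_s<\Gamma)\geq 1$. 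For OLS this is free: the OLS rule forces $RR(k_0+1)=\min_{s}\tilde{RR}_s$, so $\{RR(k_0+1)<\Gamma\}\supseteq\{\tilde{RR}_{s_0}<\Gamma\}$ for any fixed $s_0$, giving the bound immediately. For OMP, whose selection maximises $|\mathbf{X}_t^T\mathbf{r}^{k_0}|$ rather than the column-wise cosine maximised by OLS, the inclusion is not automatic and this is the main technical obstacle. I would close it by establishing positive association $\mathbb{P}(t^{k_0+1}=s\mid \tilde{RR}_s<\Gamma)\geq\mathbb{P}(t^{k_0+1}=s)$: the conditioning $\tilde{RR}_s<\Gamma$ forces the projected cosine of $\mathbf{r}^{k_0}$ onto the residualised column $(\mathbf{I}_n-\mathbf{P}_{\mathcal{S}})\mathbf{X}_s$ to be near one, while the near-orthogonality of the effective dictionary implied by RIC/MIC/ERC bounds every competing $|\mathbf{X}_t^T\mathbf{r}^{k_0}|$ by a strictly smaller factor, so $s$ wins the OMP correlation maximisation with conditional probability at least its unconditional value. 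Summing over $s\in[p]\setminus\mathcal{S}$ gives the required $\sum_{s}\mathbb{P}(t^{k_0+1}=s\mid \tilde{RR}_s<\Gamma)\geq\sum_{s}\mathbb{P}(t^{k_0+1}=s)=1$, and passing to $\sigma^2\to 0$ completes the proof.
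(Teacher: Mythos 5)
Your Step 1 is exactly the paper's reduction: lower bound $\mathbb{P}(\mathcal{S}_{RRT}\supset\mathcal{S})$ by the probability of the single undershoot event $\{RR(k_{min}+1)<\Gamma_{RRT}^{\alpha}(k_{min}+1)\}$ intersected with $\{k_{min}=k_0\}$, then use $\mathbb{P}(k_{min}=k_0)\to 1$ as $\sigma^2\to 0$. The divergence, and the gap, is in how the Beta tail gets attached to $RR(k_0+1)$. The paper never reasons about which column OMP selects at iteration $k_0+1$: it imports wholesale from the proof of Theorem 1 of the RRT paper the domination statement that, conditional on $k_{min}=j$, $RR(k)<Z_k$ for every $k>j$ with $Z_k\sim\mathbb{B}\left(\frac{n-k}{2},\frac{1}{2}\right)$, whence $\mathbb{P}\left(RR(k_0+1)<\Gamma_{RRT}^{\alpha}(k_0+1)\right)\geq F_{\frac{n-k_0-1}{2},\frac{1}{2}}\left(\Gamma_{RRT}^{\alpha}(k_0+1)\right)=\frac{\alpha}{k_{max}(p-k_0)}$ in one line. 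You instead decompose over the selected index $t^{k_0+1}$ and are left needing $\sum_{s}\mathbb{P}(t^{k_0+1}=s\mid\tilde{RR}_s<\Gamma)\geq 1$, which you correctly observe is free for OLS but not for OMP. That residual claim is precisely the content you have not proved.

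I do not think the positive-association step can be extracted from the stated hypotheses. The event $\{\tilde{RR}_s<\Gamma\}$ forces the residualised noise to align with $({\bf I}_n-{\bf P}_{\mathcal{S}}){\bf X}_s$, but OMP maximises $|{\bf X}_t^T{\bf r}^{k_0}|=\|({\bf I}_n-{\bf P}_{\mathcal{S}}){\bf X}_t\|_2\,|\cos\theta_t|\,\|{\bf r}^{k_0}\|_2$, so for $s$ to win you need both near-orthogonality of the residualised off-support columns and comparability of their norms, quantitatively relative to $\Gamma$. Neither follows from $\delta_{k_0+1}<1/\sqrt{k_0+1}$: that constant only controls submatrices of $k_0+1$ columns, whereas the joint geometry of $\mathcal{S}$ with two off-support columns involves $k_0+2$ columns and $\delta_{k_0+2}$ is unconstrained --- two off-support columns may be nearly identical, in which case conditioning on $\tilde{RR}_s<\Gamma$ makes the competitor just as likely to be selected and the association inequality fails. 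A secondary slip: the asserted exact equality $\mathbb{P}(\tilde{RR}_s<\Gamma\mid k_{min}=k_0)=\frac{\alpha}{k_{max}(p-k_0)}$ is false because conditioning on $\{k_{min}=k_0\}$ perturbs the law of $\tilde{RR}_s$; this one is harmless in the limit via $\mathbb{P}(A\cap B)\geq\mathbb{P}(A)-\mathbb{P}(B^c)$, as in the paper. The clean repair is to replace your Steps 2--3 with the conditional stochastic domination of $RR(k)$ by a $\mathbb{B}\left(\frac{n-k}{2},\frac{1}{2}\right)$ variable cited in Lemma \ref{lemma:RR_properties}'s source, rather than attempting to control the OMP selection event.
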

  \begin{proof} Please see Appendix A. 
  \end{proof}
In words, Lemma \ref{lemma:inconsistency} implies that RRT is inconsistent at high SNR.  { In particular, Lemma \ref{lemma:inconsistency}  states that RRT suffers from  false discoveries at high SNR. Indeed, one can reduce the lower bound  on support recovery error by reducing the value of $\alpha$. Since $\Gamma_{RRT}^{\alpha}(k_0)$ is an increasing function of $\alpha$\cite{icml}, reducing the value of $\alpha$ will result in decrease in $\epsilon_{rrt}$. Hence, a decrease in $\alpha$ to reduce the high SNR will result in an increase in the SNR required for accurate support recovery according to Lemma \ref{lemma:icml}. In other words, it is impossible  to improve the high SNR performance in RRT without compromising on the finite SNR performance. This is because of the fact that $\alpha$ is a user defined parameter that has to be  set independent of the data. A good solution would be to use a value of $\alpha$ like $\alpha=0.1$ ( recommended in \cite{icml} based on finite SNR estimation performance) for low to moderate SNR  and a low value of $\alpha$ like $\alpha=0.01$  or $\alpha=0.001$ in the high SNR regime. Since it is impossible to estimate SNR or $\sigma^2$ in underdetermined linear models, the statistician is unaware of operating SNR and cannot make such adaptations on $\alpha$.  Hence, achieving very low values of PE at high SNR  or HSC using  RRT is extremely difficult. }  This motivates the novel  RRM and RRTA algorithms discussed next which can achieve HSC using the residual ratio statistic itself. 
\subsection{Residual ratio minimization}
The analysis of $RR(k)$ in \cite{icml} (see Lemma \ref{lemma:RR_properties}) discussed only the behaviour of $RR(k)$ for $k\geq k_{min}$. However, no analysis of $RR(k)$ for $k<k_{min}$ is mentioned in \cite{icml}. In the following lemma, we charecterize the behaviour of $RR(k)$ for $k<k_{min}$. 
\begin{lemma} \label{lemma:kless}
Suppose that the matrix ${\bf X}$ satisfies the RIC condition $\delta_{k_0+1}<1/\sqrt{k_0+1}$. Then, for $k<k_0$,
\begin{equation}
 \underset{\sigma^2\rightarrow 0}{\lim}\mathbb{P}\left(RR(k)>\dfrac{\sqrt{1-\delta_{k_0}}\boldsymbol{\beta}_{min}}{\sqrt{1+\delta_{k_0}}(\boldsymbol{\beta}_{max}+\boldsymbol{\beta}_{min})}\right)=1.
 \end{equation}
 
\end{lemma}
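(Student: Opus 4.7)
The plan is to bound $RR(k)$ away from zero by controlling $\|{\bf r}^k\|_2$ and $\|{\bf r}^{k-1}\|_2$ separately in the $\sigma^2\to 0$ limit, conditioned on the ``good event'' $\{k_{min}=k_0\}$. By Lemma \ref{lemma:RR_properties} combined with the RIP hypothesis in Lemma \ref{lemma:latest_omp}, this event has probability tending to $1$ as $\sigma^2\to 0$. On it, $\mathcal{S}_k\subseteq\mathcal{S}$ for all $k\leq k_0$, so $J:=\mathcal{S}/\mathcal{S}_k$ is non-empty whenever $k<k_0$, and one has ${\bf r}^k=({\bf I}-{\bf P}_k){\bf X}_J\boldsymbol{\beta}_J+({\bf I}-{\bf P}_k){\bf w}$, where the noise term vanishes in probability.

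For the numerator I would apply RIP on $\mathcal{S}_k\cup J=\mathcal{S}$. The vector $({\bf I}-{\bf P}_k){\bf X}_J\boldsymbol{\beta}_J={\bf X}_J\boldsymbol{\beta}_J-{\bf X}_{\mathcal{S}_k}{\bf z}^\star$, with ${\bf z}^\star$ the least-squares coefficient, is of the form ${\bf X}{\bf u}$ with ${\bf u}$ supported on $\mathcal{S}$ and ${\bf u}_J=\boldsymbol{\beta}_J$, so $\|{\bf u}\|_2\geq \|\boldsymbol{\beta}_J\|_2\geq \boldsymbol{\beta}_{min}$. The lower RIP bound then yields $\|({\bf I}-{\bf P}_k){\bf X}_J\boldsymbol{\beta}_J\|_2\geq D:=\sqrt{1-\delta_{k_0}}\,\boldsymbol{\beta}_{min}$, uniformly in $k<k_0$.

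For the denominator I would exploit the one-column-at-a-time geometry of OMP. Setting ${\bf u}_k=({\bf I}-{\bf P}_{k-1}){\bf X}_{t^k}$ and using the orthogonality ${\bf u}_k\perp col({\bf X}_{\mathcal{S}_{k-1}})$ together with ${\bf X}_{t^k}={\bf P}_{k-1}{\bf X}_{t^k}+{\bf u}_k$, the noiseless limit admits the one-line identity ${\bf r}^{k-1}={\bf r}^k+\gamma\,{\bf u}_k$ with $\gamma=\boldsymbol{\beta}_{t^k}+{\bf u}_k^T{\bf X}_J\boldsymbol{\beta}_J/\|{\bf u}_k\|_2^2$. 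Since ${\bf r}^k\perp col({\bf X}_{\mathcal{S}_k})\supseteq\mathrm{span}({\bf u}_k)$, Pythagoras gives $\|{\bf r}^{k-1}\|_2^2=\|{\bf r}^k\|_2^2+\gamma^2\|{\bf u}_k\|_2^2$, whence $\|{\bf r}^{k-1}\|_2\leq\|{\bf r}^k\|_2+|\gamma|\|{\bf u}_k\|_2$. I split $|\gamma|\|{\bf u}_k\|_2\leq|\boldsymbol{\beta}_{t^k}|\|{\bf u}_k\|_2+|{\bf u}_k^T{\bf X}_J\boldsymbol{\beta}_J|/\|{\bf u}_k\|_2$, use $\|{\bf u}_k\|_2\leq 1$ and $|\boldsymbol{\beta}_{t^k}|\leq \boldsymbol{\beta}_{max}$ on the first piece, and Cauchy-Schwarz with RIP on the projected inner product of the second piece to control it by a $\boldsymbol{\beta}_{min}$-scale term. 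Writing the resulting bound as $C$ and feeding it into the monotone map $a\mapsto a/(a+c)$, $c\geq 0$, I obtain $RR(k)\geq D/(D+C)$; a final RIP-based inflation of the denominator to $\sqrt{1+\delta_{k_0}}(\boldsymbol{\beta}_{max}+\boldsymbol{\beta}_{min})$ yields the claimed inequality, which then transfers to the probability statement by the convergence in probability of the residuals on the good event.

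The delicate step is keeping the bound on $|\gamma|\|{\bf u}_k\|_2$ free of the $\sqrt{|J|}\boldsymbol{\beta}_{max}$ factor that a naive $\|{\bf X}_J\boldsymbol{\beta}_J\|_2\leq\sqrt{1+\delta_{k_0}}\sqrt{|J|}\boldsymbol{\beta}_{max}$ bound would produce. The key observation is that $\gamma\,{\bf u}_k$ is a one-dimensional projection, so only the scalar inner product of ${\bf X}_J\boldsymbol{\beta}_J$ along the direction ${\bf u}_k$ enters; combined with the RIP-based lower bound on the numerator this one-dimensional character is what produces the dimension-free $(\boldsymbol{\beta}_{max}+\boldsymbol{\beta}_{min})$ sum rather than a $\sqrt{k_0}$-dependent prefactor, and the rest of the argument is routine square-root sub-additivity plus the monotonicity of $a\mapsto a/(a+c)$.
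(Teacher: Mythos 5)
Your overall architecture (condition on $\{k_{min}=k_0\}$, which has probability tending to one under the RIC hypothesis, and lower-bound the numerator by $\|{\bf r}^k\|_2\geq\sqrt{1-\delta_{k_0}}\|\boldsymbol{\beta}_{\mathcal{S}/\mathcal{S}_k}\|_2-\|{\bf w}\|_2$ via RIP) matches the paper, but your treatment of the denominator has a genuine gap. Having frozen the numerator at $D=\sqrt{1-\delta_{k_0}}\boldsymbol{\beta}_{min}$, your route requires the one-step increment $C=|\gamma|\,\|{\bf u}_k\|_2$ to satisfy $D+C\leq\sqrt{1+\delta_{k_0}}(\boldsymbol{\beta}_{max}+\boldsymbol{\beta}_{min})$, i.e. $C\leq\sqrt{1+\delta_{k_0}}\,\boldsymbol{\beta}_{max}+\bigl(\sqrt{1+\delta_{k_0}}-\sqrt{1-\delta_{k_0}}\bigr)\boldsymbol{\beta}_{min}$. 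Your own split gives $C\leq|\boldsymbol{\beta}_{t^k}|\,\|{\bf u}_k\|_2+|{\bf u}_k^T{\bf X}_J\boldsymbol{\beta}_J|/\|{\bf u}_k\|_2$, and the second piece is not a ``$\boldsymbol{\beta}_{min}$-scale term'': Cauchy--Schwarz gives $\|{\bf X}_J\boldsymbol{\beta}_J\|_2\leq\sqrt{1+\delta_{k_0}}\|\boldsymbol{\beta}_J\|_2$, which can be of order $\sqrt{k_0}\,\boldsymbol{\beta}_{max}$, and even the sharper RIP near-orthogonality bound leaves a term of order $\delta_{k_0}\|\boldsymbol{\beta}_J\|_2/\|{\bf u}_k\|_2$, which exceeds the available slack $(\sqrt{1+\delta_{k_0}}-1)\boldsymbol{\beta}_{max}+(\sqrt{1+\delta_{k_0}}-\sqrt{1-\delta_{k_0}})\boldsymbol{\beta}_{min}$ whenever $\boldsymbol{\beta}_{min}/\boldsymbol{\beta}_{max}$ is small. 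The ``one-dimensional character'' of $\gamma\,{\bf u}_k$ does not rescue this: the scalar ${\bf u}_k^T{\bf X}_J\boldsymbol{\beta}_J/\|{\bf u}_k\|_2$ genuinely scales with $\|\boldsymbol{\beta}_J\|_2$. As written you obtain a lower bound on $RR(k)$ that is bounded away from zero (which would suffice for the downstream HSC theorems) but not the specific constant asserted in the lemma.

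The paper avoids the increment entirely. Writing $u^k=\mathcal{S}/\mathcal{S}_k$, it bounds \emph{both} residuals through the same quantity: $\|{\bf r}^k\|_2\geq\sqrt{1-\delta_{k_0}}\|\boldsymbol{\beta}_{u^k}\|_2-\|{\bf w}\|_2$ and $\|{\bf r}^{k-1}\|_2\leq\sqrt{1+\delta_{k_0}}\|\boldsymbol{\beta}_{u^{k-1}}\|_2+\|{\bf w}\|_2$, then uses that $u^{k-1}$ and $u^k$ differ in a single index to get $\|\boldsymbol{\beta}_{u^{k-1}}\|_2\leq\|\boldsymbol{\beta}_{u^k}\|_2+\boldsymbol{\beta}_{max}$. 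The resulting ratio is an increasing function of the single variable $t=\|\boldsymbol{\beta}_{u^k}\|_2$, so substituting the worst case $t=\boldsymbol{\beta}_{min}$ and letting $\|{\bf w}\|_2\overset{P}{\rightarrow}0$ yields exactly the stated constant. It is precisely this coupling of numerator and denominator through the common $t$ that you forfeit by fixing the numerator at $\boldsymbol{\beta}_{min}$ before bounding the denominator; to repair your argument you would need to keep the numerator as $\sqrt{1-\delta_{k_0}}\|\boldsymbol{\beta}_{u^k}\|_2$ and express your bound on the denominator in terms of the same $\|\boldsymbol{\beta}_{u^k}\|_2$ before optimizing.
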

\begin{proof}
Please see Appendix B.
\end{proof}
Combining Lemma \ref{lemma:RR_properties} and Lemma \ref{lemma:kless}, one can see that with increasing SNR, $RR(k)$ for $k<k_0$  is bounded away from zero, whereas,  $RR(k)$ for $k>k_0$ behave like a R.V that is bounded from below by a constant with a very high probability. In contrast to the behaviour of $RR(k)$ for $k\neq k_0$,  $RR(k_0)$ converges to zero with increasing SNR. Consequently, under appropriate regularity conditions on the design matrix ${\bf X}$, $\underset{k}{\arg\min}RR(k)$ will converge to $k_0$  with increasing  SNR. Also from Lemmas \ref{lemma:latest_omp} and \ref{lemma:RR_properties},  we know that $k_{min}=k_0$ and $\mathcal{S}_{k_0}=\mathcal{S}$ with a very high probability  at high SNR.  Consequently,  the  support estimate given by 
\begin{equation}
\mathcal{S}_{RRM}=\mathcal{S}_{k_{RRM}}, \ \text{where} \ k_{RRM}= \underset{k=1,\dotsc,k_{max}}{\arg\min}RR(k)
\end{equation}
will be equal to the true support $\mathcal{S}$ with a probability increasing with increasing SNR.  $\mathcal{S}_{RRM}$ is the residual ratio minimization  based support estimate proposed in this article. 
The following theorem states that RRM is a high SNR consistent estimator of support $\mathcal{S}$.
\begin{thm}\label{thm:rmm_hsc}
Suppose that the matrix ${\bf X}$ satisfies RIC condition $\delta_{k_0+1}<\dfrac{1}{\sqrt{k_0+1}}$. Then RRM is high SNR consistent, i.e., $\underset{\sigma^2 \rightarrow 0}{\lim}\mathbb{P}(\mathcal{S}_{RRM}=\mathcal{S})=1$. 
\end{thm}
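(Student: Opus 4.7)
The plan is to show that, under the RIC hypothesis, two events occur simultaneously with probability approaching one as $\sigma^2\to 0$: the $k_0$-th OMP iterate satisfies $\mathcal{S}_{k_0}=\mathcal{S}$, and $k_{RRM}=k_0$. Since $\mathcal{S}_{RRM}=\mathcal{S}_{k_{RRM}}$, the conjunction yields $\mathcal{S}_{RRM}=\mathcal{S}$. The ingredients are (a) Lemma \ref{lemma:latest_omp}(1), which says $\|\mathbf{w}\|_2\le\epsilon_{omp}$ (a quantity independent of $\sigma^2$ and strictly positive under the RIC) already forces $\mathcal{S}_{k_0}=\mathcal{S}$, an event of probability $1-o(1)$ since $\|\mathbf{w}\|_2$ scales with $\sigma$; (b) Lemma \ref{lemma:RR_properties}(2), giving $RR(k_0)\overset{p}{\rightarrow} 0$; (c) Lemma \ref{lemma:kless}, giving $\mathbb{P}(RR(k)>C_1)\to 1$ for each $k<k_0$, where $C_1=\sqrt{1-\delta_{k_0}}\boldsymbol{\beta}_{min}/\bigl(\sqrt{1+\delta_{k_0}}(\boldsymbol{\beta}_{max}+\boldsymbol{\beta}_{min})\bigr)>0$; and (d) Lemma \ref{lemma:RR_properties}(1), a simultaneous lower bound $RR(k)>\Gamma_{RRT}^{\alpha}(k)$ for every $k>k_{min}$ holding with probability at least $1-\alpha$ for any fixed $\alpha\in(0,1)$.

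The chaining is then routine. Fix an arbitrary $\alpha\in(0,1)$. On the intersection of the event in (a), the $\alpha$-event in (d) (which, conditioned on $k_{min}=k_0$ from (a), lower-bounds $RR(k)$ for all $k>k_0$), the finitely many events in (c) combined by a union bound, and the event $\{RR(k_0)<\delta\}$ from (b) with $\delta<\min\!\bigl(C_1,\min_{k_0<k\le k_{max}}\Gamma_{RRT}^{\alpha}(k)\bigr)$, we obtain $RR(k_0)<\delta<RR(k)$ for every $k\neq k_0$ in $\{1,\dots,k_{max}\}$. Hence $k_{RRM}=k_0$ and $\mathcal{S}_{RRM}=\mathcal{S}_{k_0}=\mathcal{S}$. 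A union bound over these four events gives $\mathbb{P}(\mathcal{S}_{RRM}=\mathcal{S})\ge 1-\alpha-o(1)$; sending $\sigma^2\to 0$ and then $\alpha\to 0$ completes the argument.

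The only subtlety is the $k>k_0$ regime. Once $k_{min}=k_0$, a direct calculation gives $\mathbf{r}^{k}=(\mathbf{I}_n-\mathbf{P}_{k})\mathbf{w}$ and $\mathbf{r}^{k-1}=(\mathbf{I}_n-\mathbf{P}_{k-1})\mathbf{w}$, so $RR(k)^2$ becomes a ratio of noise-projection norms whose distribution is independent of $\sigma^2$ and does \emph{not} concentrate on a positive constant. Consequently a deterministic high-SNR lower bound on $RR(k)$ for $k>k_0$ is unavailable, and the $\alpha$-slack in Lemma \ref{lemma:RR_properties}(1), followed by letting $\alpha\to 0$ \emph{after} taking $\sigma^2\to 0$, is essential; this is the same device that powers Lemma \ref{lemma:icml}(2). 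I do not anticipate any further obstacle.
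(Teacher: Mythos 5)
Your proposal is correct and follows essentially the same route as the paper's Appendix D: the same four-way decomposition ($k_{min}=k_0$; $RR(k)$ bounded away from zero for $k<k_0$ via Lemma \ref{lemma:kless}; $RR(k_0)\overset{p}{\to}0$; and the probability-$(1-\alpha)$ lower bound $RR(k)>\Gamma_{RRT}^{\alpha}(k)$ for $k>k_{min}$ from Lemma \ref{lemma:RR_properties}), followed by sending $\sigma^2\to 0$ before $\alpha\to 0$. The only cosmetic difference is that the paper packages the first three ingredients as the deterministic noise-norm conditions $\|{\bf w}\|_2\le\min(\epsilon_{omp},\tilde{\epsilon}_{rrt},\epsilon_{rrm})$ inherited from the proof of Theorem \ref{thm:rrm_finite} (with $\alpha=\delta/2$), whereas you invoke the corresponding asymptotic lemmas directly; the substance is identical.
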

\begin{proof}
Please see Appendix D.
\end{proof}

While HSC is an important qualifier for any support recovery technique, it's finite SNR performance is also very important. The following theorem quantifies the finite SNR performance of RRM. 
\begin{thm}\label{thm:rrm_finite}
Suppose that the design matrix ${\bf X}$ satisfies $\delta_{k_{0}+1}<\dfrac{1}{\sqrt{k_{0}+1}}$ and $0<\alpha<1$ is a constant. Then RRM can recover the true support $\mathcal{S}$ with a probability greater than $1-1/n-\alpha$ once $\epsilon_{\sigma}=\sigma\sqrt{n+2\sqrt{n\log(n)}}<\min\left(\epsilon_{omp},\tilde{\epsilon_{rrt}},\epsilon_{rrm}\right)$, where $\epsilon_{omp}$ is given in Lemma \ref{lemma:latest_omp}, 
\begin{equation}\label{epsb}
{\epsilon_{rrm}}= \dfrac{\sqrt{1-\delta_{k_{0}}}\boldsymbol{\beta}_{min}}{1+\frac{\sqrt{1+\delta_{k_{0}}}}{\sqrt{1-\delta_{k_{0}}}} \left(2+  \frac{\boldsymbol{\beta}_{max}}{\boldsymbol{\beta}_{min}} \right)} \ \text{and}
\end{equation}  
 \begin{equation}\label{epsb}
\tilde{\epsilon}_{rrt}=\dfrac{\underset{1\leq k\leq k_{max}}{\min}\Gamma_{RRT}^{\alpha}(k)\sqrt{1-\delta_{{k_{0}}}}\boldsymbol{\beta}_{min}}
{1+\underset{1\leq k\leq k_{max}}{\min}\Gamma_{RRT}^{\alpha}(k)}. 
\end{equation}
\end{thm}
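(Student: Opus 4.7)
The plan is to establish $\mathcal{S}_{RRM} = \mathcal{S}$ on the intersection of two high-probability events whose combined complement has probability at most $1/n + \alpha$. The first is the noise-norm event $\mathcal{E}_1 = \{\|\mathbf{w}\|_2 \leq \epsilon_\sigma\}$, which has probability at least $1 - 1/n$ by the Gaussian tail bound quoted just before Lemma \ref{lemma:latest_omp}. The second is the residual-ratio event $\mathcal{E}_2 = \{RR(k) > \Gamma_{RRT}^\alpha(k) \text{ for all } k > k_{min}\}$ supplied by statement 1 of Lemma \ref{lemma:RR_properties}, which has probability at least $1 - \alpha$. After a union bound, everything that follows is a deterministic argument on $\mathcal{E}_1 \cap \mathcal{E}_2$, and the task is to show that the three hypotheses on $\epsilon_\sigma$ force both $\mathcal{S}_{k_0} = \mathcal{S}$ and $\arg\min_k RR(k) = k_0$.

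The hypothesis $\epsilon_\sigma < \epsilon_{omp}$ combined with Lemma \ref{lemma:latest_omp} statement 2 delivers $\mathcal{S}_{k_0} = \mathcal{S}$ on $\mathcal{E}_1$, so $k_{min} = k_0$ and $\mathcal{S}_k \subsetneq \mathcal{S}$ for every $k < k_0$. To rule out $k > k_0$ as the argmin, I would upper bound $RR(k_0)$ by writing $\mathbf{r}^{k_0} = (\mathbf{I}-\mathbf{P}_\mathcal{S})\mathbf{w}$ to get $\|\mathbf{r}^{k_0}\|_2 \leq \epsilon_\sigma$, and by applying the RIP condition $\delta_{k_0+1} < 1/\sqrt{k_0+1}$ to the single missing index in $\mathbf{r}^{k_0-1}$ to get $\|\mathbf{r}^{k_0-1}\|_2 \geq \sqrt{1-\delta_{k_0}}\boldsymbol{\beta}_{min} - \epsilon_\sigma$. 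Hence $RR(k_0) \leq \epsilon_\sigma/(\sqrt{1-\delta_{k_0}}\boldsymbol{\beta}_{min} - \epsilon_\sigma)$, and $\tilde{\epsilon}_{rrt}$ is precisely the threshold at which this upper bound falls strictly below $\min_{1 \leq k \leq k_{max}} \Gamma_{RRT}^\alpha(k)$. On $\mathcal{E}_2$ that minimum is a strict lower bound for $RR(k)$ at every $k > k_0$, so $\arg\min_k RR(k) \leq k_0$.

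The main obstacle is ruling out $k < k_0$ using only $\epsilon_\sigma < \epsilon_{rrm}$, which amounts to a finite-SNR refinement of Lemma \ref{lemma:kless}. I would replay the Appendix B argument while carrying $\|\mathbf{w}\|_2 \leq \epsilon_\sigma$ as a free parameter rather than sending it to zero. The same RIP calculation as in Step 2 gives $\|\mathbf{r}^k\|_2 \geq \sqrt{1-\delta_{k_0}}\boldsymbol{\beta}_{min} - \epsilon_\sigma$ because $\mathcal{S}_k$ still misses at least one coordinate of $\mathcal{S}$ whose magnitude is $\geq \boldsymbol{\beta}_{min}$. The delicate piece is the upper bound on the denominator $\|\mathbf{r}^{k-1}\|_2$: its signal dependence must be captured by $\boldsymbol{\beta}_{max}$ and $\boldsymbol{\beta}_{min}$ alone (mirroring the $\boldsymbol{\beta}_{max}+\boldsymbol{\beta}_{min}$ factor in Lemma \ref{lemma:kless}) rather than scaling with $\|\boldsymbol{\beta}\|_2$ or $\sqrt{k_0}$, and an additional noise contribution of order $\epsilon_\sigma$ must be tracked through. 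Dividing the lower bound on $\|\mathbf{r}^k\|_2$ by this upper bound, comparing against the $RR(k_0)$ upper bound from Step 2, and clearing denominators isolates $\epsilon_\sigma < \epsilon_{rrm}$ as the exact sufficient condition; the constants $2$ and $\boldsymbol{\beta}_{max}/\boldsymbol{\beta}_{min}$ appearing in the denominator of $\epsilon_{rrm}$ emerge from combining the signal and noise terms in this rearrangement. With $\arg\min_k RR(k) = k_0$ established and $\mathcal{S}_{k_0} = \mathcal{S}$ from Step 2, we conclude $\mathcal{S}_{RRM} = \mathcal{S}$ on $\mathcal{E}_1 \cap \mathcal{E}_2$, yielding the claimed probability bound.
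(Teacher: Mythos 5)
Your proposal is correct and follows essentially the same route as the paper's Appendix C: condition on the noise-norm event (probability $\geq 1-1/n$) and the residual-ratio event from Lemma \ref{lemma:RR_properties} (probability $\geq 1-\alpha$), then show deterministically that $\epsilon_{omp}$ forces $k_{min}=k_0$, that $\tilde{\epsilon}_{rrt}$ pushes the upper bound $RR(k_0)\leq \|{\bf w}\|_2/(\sqrt{1-\delta_{k_0}}\boldsymbol{\beta}_{min}-\|{\bf w}\|_2)$ below $\min_k\Gamma_{RRT}^{\alpha}(k)$, and that $\epsilon_{rrm}$ makes the finite-noise lower bound on $RR(k)$ for $k<k_0$ (the one already derived in Appendix B with $\|{\bf w}\|_2$ carried as a free parameter) exceed that same upper bound on $RR(k_0)$. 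The paper packages this as four events $\mathcal{A}_1$--$\mathcal{A}_4$, but the content and the origin of each of the three thresholds match your outline.
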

\begin{proof} Please see Appendix C.
\end{proof}
Please note that the presence of $\alpha$ in Theorem \ref{thm:rrm_finite}  is an artefact of our analytical framework. Very importantly, unlike RRT, there are no user specified hyperparameters in RRM.
The following remark compares the finite SNR support recovery guarantee for RRM  in Theorem \ref{thm:rrm_finite} with that of RRT.
\begin{remark} For the same $(1-\alpha-1/n)$ bound on the probability of error, RRM requires higher SNR level than RRT. This is true since $\tilde{\epsilon}_{rrt}=\dfrac{\underset{1\leq k\leq k_{max}}{\min}\Gamma_{RRT}^{\alpha}(k)\sqrt{1-\delta_{{k_{0}}}}\boldsymbol{\beta}_{min}}
{1+\underset{1\leq k\leq k_{max}}{\min}\Gamma_{RRT}^{\alpha}(k)}$ in Theorem \ref{thm:rrm_finite} is lower than the $\epsilon_{rrt}=\dfrac{\Gamma_{RRT}^{\alpha}(k_0)\sqrt{1-\delta_{{k_0}}}\boldsymbol{\beta}_{min}}
{1+\Gamma_{RRT}^{\alpha}(k_0)}$ of Lemma \ref{lemma:icml} for RRT. Further, unlike RRT where the support recovery guarantee depends only on $\boldsymbol{\beta}_{min}$, the support recovery guarantee for RRM involves the term $DR(\boldsymbol{\beta})=\boldsymbol{\beta}_{max}/\boldsymbol{\beta}_{min}$. In particular, the SNR required for exact support recovery using RRM increases with increasing  $DR(\boldsymbol{\beta})$. This limits the finite SNR utility of RRM for signals with high $DR(\boldsymbol{\beta})$.  
\end{remark}
\begin{remark} The detiorating performance of RRM with increasing $DR(\boldsymbol{\beta})$ can be explained as follows. Note that both RRM and RRT try to identify the sudden decrease in $\|{\bf r}^k\|_2$ compared to $\|{\bf r}^{k-1}\|_2$ once $\mathcal{S}\subseteq \mathcal{S}_k$ for the first time, i.e., when $k=k_{min}$. This sudden decrease in $\|{\bf r}^k\|_2$ at $k=k_{min}$ is due to the removal of signal component in ${\bf r}^k$ at the $k_{min}^{th}$ iteration. However, a similar dip in $\|{\bf r}^k\|_2$ compared to $\|{\bf r}^{k-1}\|_2$ can also happens at a $k<k_{min}^{th}$ iteration if $\boldsymbol{\beta}_{t^k}$ ($t^k$ is the index selected by OMP in it's $k^{th}$ iteration) contains  most of the energy in the regression vector $\boldsymbol{\beta}$.  This  intermediate dip  in $RR(k)$ can be more pronounced than the dip happening  at the $k_{min}^{th}$ iteration when the SNR is moderate and $DR(\boldsymbol{\beta})$ is high. Consequently, the RRM estimate has a tendency to underestimate $k_{min}$ (i.e., $k_{RRM}<k_{min}$)  when $DR(\boldsymbol{\beta})$ is high. Note that by Lemmas \ref{lemma:latest_omp} and \ref{lemma:RR_properties}, $k_{min}=k_0$ and $\mathcal{S}_{k_{min}}=\mathcal{S}$ with a very high probability once $\epsilon_{\sigma}<\epsilon_{omp}$. Hence, this tendency of RRM to underestimate $k_{min}$ results in a support recovery error when $DR(\boldsymbol{\beta})$ is high.  This  behaviour of RRM is reflected in the higher SNR required for  recovering the support of $\boldsymbol{\beta}$ with higher $DR(\boldsymbol{\beta})$. Please note that RRM will be consistent at high SNR irrespective of the value of $DR(\boldsymbol{\beta})$. Since RRT is looking for the ``last" significant dip instead of the ``most significant" dip in $RR(k)$, RRT is not affected by the variations in $DR(\boldsymbol{\beta})$.

\end{remark}
\subsection{Residual ratio thresholding with adaptation (RRTA)}
As aforementioned, inspite of it's HSC and noise statistics oblivious nature, RRM has poor finite SNR support recovery performance for signals with high dynamic range and good high SNR performance for all types of signals. In contrast to RRM,  RRT has good finite SNR performance and inferior high SNR performance. This motivates the  RRTA algorithm which tries to combine the strengths of both RRT and RRM to produce a high SNR consistent support recovery scheme that also has good finite SNR performance. Recall from Lemma \ref{lemma:RR_properties} that when the matrix ${\bf X}$ satisfies the RIC condition $\delta_{k_0+1}<\dfrac{1}{\sqrt{k_0+1}}$, then the minimum value of $RR(k)$ decreases to zero with increasing SNR. Also recall that the reason for the high SNR inconsistency of RRT lies in our inability to adapt the RRT hyperparameter $\alpha$  with respect  to the operating SNR. In particular, for the high SNR consistency of RRT, we need  to enable the adaptation $\alpha\rightarrow 0$ as $\sigma^2\rightarrow 0$ without knowing $\sigma^2$.  Even though $\sigma^2$ is a  parameter very difficult to estimate, given that the minimum value of $RR(k)$ decreases to zero with increasing SNR or decreasing $\sigma^2$,  it is still possible to adapt $\alpha\rightarrow 0$ with increasing SNR by making $\alpha$ a monotonically increasing function of $\underset{k}{\min}RR(k)$. This is the proposed  RRTA technique which ``adapts" the $\alpha$ parameter in the RRT algorithm using $\underset{k}{\min}RR(k)$.  The support estimate in RRTA can be formally expressed as 
\begin{equation}
 \mathcal{S}_{RRTA}=\mathcal{S}_{k_{RRTA}},\ \text{where} \ k_{RRTA}=\max\{k:RR(k)<\Gamma_{RRT}^{\alpha*}(k)\}\  
\end{equation}   
\begin{equation}
\text{and} \ \ \alpha^*=\min\left(PFD_{finite},\underset{k}{\min}RR(k)^q\right) 
\end{equation}
for some $q>0$ and $PFD_{finite}>0$. RRTA algorithm has two user defined parameters,i.e., $PFD_{finite}$ and $q$ which control the finite SNR and high SNR behaviours respectively.  

We first discuss the choice of hyperparameter $PFD_{finite}$. Note that $\underset{k}{\min}RR(k)^q$ 
will take  small values only at high SNR. Hence, with a choice of  $\alpha^*=\min\left(PFD_{finite},\underset{k}{\min}RR(k)^q\right)$,  RRTA will operate like RRT with $\alpha=PFD_{finite}$ in the low to moderate high SNR regime, whereas, RRTA will operate like RRT  with $\alpha=\underset{k}{\min}RR(k)^q$ in the high SNR regime. As discussed in Lemma \ref{lemma:icml}, $RRT$ with a date independent parameter $\alpha$ can control the probability of false discovery (PFD) in the finite SNR regime within $\alpha$. Hence, the user specified value of $PFD_{finite}$ specifies the  maximum finite SNR probability of false discovery allowed in RRTA. This is a design choice. Following the empirical results and recommendations in  \cite{icml} we set this parameter to $PFD_{finite}=0.1$.   

As aforementioned, the choice of second hyperparameter $q$  determines the high SNR behaviour of RRTA. The following theorem specifies the requirements on the hyperparameter $q$ such that RRTA is a high SNR consistent estimator of the true support $\mathcal{S}$.
\begin{thm}\label{thm:RRTA}
Suppose that the matrix satisfies RIP condition of order $k_0+1$ and $\delta_{k_0+1}<1/\sqrt{k_0+1}$. Also suppose that $\alpha^{*}=\min\left(PFD_{finite}, \underset{k}{\arg\min}RR(k)^q\right)$ for some $q>0$. Then RRTA is high SNR consistent, i.e., $\underset{\sigma^2 \rightarrow 0}{\lim}\mathbb{P}(\mathcal{S}_{RRTA}=\mathcal{S})=1$ for any fixed $PFD_{finite}>0$ once $n>k_0+q$. 
\end{thm}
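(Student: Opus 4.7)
The plan is to condition on the high-SNR event that $k_{min}=k_0$ (so $\mathcal{S}_{k_0}=\mathcal{S}$ by Lemma \ref{lemma:RR_properties}) and then verify two things about the random threshold $\alpha^\ast$: (a) $\alpha^\ast$ becomes small enough that the uniform bound $RR(k)>\Gamma_{RRT}^{\alpha_0}(k)$ for $k>k_0$ from Lemma \ref{lemma:RR_properties}(1) is automatically inherited, so the support estimator does not overshoot $k_0$; and (b) $\alpha^\ast$ nevertheless stays large enough that $\Gamma_{RRT}^{\alpha^\ast}(k_0)$ dominates $RR(k_0)$, so $k_0$ itself is admitted. The interplay in (b) is where the hypothesis $n>k_0+q$ enters, and it is the delicate step.

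First I would prove a structural lemma identifying $\min_k RR(k)$ at high SNR. By Lemma \ref{lemma:kless}, for $k<k_0$ the residual ratio $RR(k)$ stays above a positive constant with probability tending to one; by Lemma \ref{lemma:RR_properties}(1), for any fixed $\alpha_0\in(0,PFD_{finite})$ the inequality $RR(k)>\Gamma_{RRT}^{\alpha_0}(k)>0$ holds uniformly for all $k>k_{min}$ with probability at least $1-\alpha_0$; and by Lemma \ref{lemma:RR_properties}(2), $RR(k_0)\overset{P}{\to} 0$ and $k_{min}=k_0$ with probability tending to one. Intersecting these events, at sufficiently high SNR we have with probability at least $1-\alpha_0-o(1)$ that $\min_k RR(k)=RR(k_0)$ and $RR(k_0)^q<PFD_{finite}$, hence $\alpha^\ast=RR(k_0)^q$.

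On that event, step (a) is easy: because $\alpha\mapsto\Gamma_{RRT}^{\alpha}(k)$ is monotonically increasing in $\alpha$, the bound $\alpha^\ast=RR(k_0)^q\le\alpha_0$ (valid eventually since $RR(k_0)\overset{P}{\to}0$) gives $\Gamma_{RRT}^{\alpha^\ast}(k)\le\Gamma_{RRT}^{\alpha_0}(k)<RR(k)$ for every $k>k_0$, so no index exceeding $k_0$ passes the RRTA test. Step (b) requires a quantitative lower bound on $\Gamma_{RRT}^{\alpha}(k_0)$. Using the small-$x$ expansion of the Beta CDF $F_{(n-k_0)/2,1/2}$, namely the elementary estimate $F_{a,1/2}(x)\le\sqrt{2}\,x^{a}/(aB(a,1/2))$ for $x\le 1/2$ with $a=(n-k_0)/2$, inversion yields $\Gamma_{RRT}^{\alpha}(k_0)\ge C\,\alpha^{1/(n-k_0)}$ for a constant $C=C(n,k_0,p,k_{max})>0$ and all sufficiently small $\alpha$. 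Substituting $\alpha^\ast=RR(k_0)^q$ gives
\begin{equation}
\Gamma_{RRT}^{\alpha^\ast}(k_0)\ge C\,RR(k_0)^{q/(n-k_0)},
\end{equation}
so the required inequality $RR(k_0)<\Gamma_{RRT}^{\alpha^\ast}(k_0)$ reduces to $RR(k_0)^{1-q/(n-k_0)}<C$. The assumption $n>k_0+q$ makes the exponent $1-q/(n-k_0)$ strictly positive, and since $RR(k_0)\overset{P}{\to}0$, this inequality is satisfied with probability tending to one.

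Combining (a) and (b), at high SNR and with probability $1-\alpha_0-o(1)$ the set $\{k:RR(k)<\Gamma_{RRT}^{\alpha^\ast}(k)\}$ contains $k_0$ but no $k>k_0$, hence $k_{RRTA}=k_0$ and $\mathcal{S}_{RRTA}=\mathcal{S}_{k_0}=\mathcal{S}$. Since $\alpha_0>0$ was arbitrary, letting $\alpha_0\to 0$ after the SNR limit yields $\lim_{\sigma^2\to 0}\mathbb{P}(\mathcal{S}_{RRTA}=\mathcal{S})=1$. The main obstacle is the handling of the data-dependent threshold $\alpha^\ast$: one cannot directly invoke Lemma \ref{lemma:RR_properties}(1) with a random $\alpha$, so the proof must dominate $\alpha^\ast$ by a deterministic $\alpha_0$ on one side and extract a sufficiently slow polynomial decay of $\Gamma_{RRT}^{\alpha^\ast}(k_0)$ on the other, which is exactly what the condition $n>k_0+q$ buys.
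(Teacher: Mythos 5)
Your proof is correct and follows essentially the same route as the paper's Appendix E: the same three-event decomposition ($k_{min}=k_0$; $k_0$ passes the test; no $k>k_0$ passes), the same domination of the random $\alpha^*$ by a deterministic level combined with monotonicity of $\Gamma_{RRT}^{\alpha}(k)$ in $\alpha$, and the same exponent comparison $q/(n-k_0)<1$ to show $\Gamma_{RRT}^{RR(k_0)^q}(k_0)$ dominates $RR(k_0)$. The only cosmetic difference is that you obtain the lower bound $\Gamma_{RRT}^{\alpha}(k_0)\gtrsim \alpha^{1/(n-k_0)}$ from an elementary bound on the incomplete Beta integral rather than the series expansion of $F^{-1}_{a,b}$ used in the paper's Appendix F; the asymptotic content is identical.
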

\begin{proof}
Please see Appendix D.
\end{proof}
\begin{remark} Note that $\alpha^*=\min\left(PFD_{finite}, \underset{k}{\arg\min}RR(k)^q\right)$ is a monotonic function of $\underset{k}{\arg\min}RR(k)$ at high SNR for all values of $q>0$. However,  the rate at which  $\underset{k}{\arg\min}RR(k)^q$ decreases to zero increases with increasing values of $q$. The constriants $q>0$ and $q<n-k_0$ ensures that $\underset{k}{\arg\min}RR(k)^q$ should decrease to zero at a rate that is not too high. A very small value of $q$ implies that $\underset{k}{\arg\min}RR(k)^q$ will be greater than $PFD_{finite}=0.1$ for the entire operating SNR range thereby denying RRTA the required SNR adaptation, whereas, a very large value of $q$ implies that $\underset{k}{\arg\min}RR(k)^q<0.1$ even for low SNR. Operating RRT  with a very low value of  $\alpha$ at low SNR results in inferior performance. Hence, the choice of $q$ is important in RRTA. Through extensive numerical simulations, we observed that a value of $q=2$ delivers the best overall performance in the low and high SNR regimes. Such subjective choices are also involved in the noise variance aware HSC results developed in \cite{elsevier}.   
\end{remark}
\begin{remark} With the choice of $q=2$, the constraint $n>k_0+2$ has to be satisifed for the HSC of RRTA. Note that $k_0$ is unknown \textit{a priori} and hence it is impossible to check the condition $n>k_0+2$.  However,  successfull sparse recovery using any sparse recovery technique requires $k_0<\floor{\frac{n+1}{2}}$ or equivalently $n>2k_0-1$. Note that $2k_0-1>k_0+2$ for all $k_0>3$. In addition to this,   the regularity condition $\delta_{k_0+1}<\dfrac{1}{\sqrt{k_0+1}}$  required for sparse recovery using OMP will be satisfied in many widely used matrices ${\bf X}$ if $n=O\left(k_0^2\log(p)\right)$. Hence, the condition $n>k_0+2$ will be  satisfied automatically in all problems where OMP is expected to carry out successfull sparse recovery.   
\end{remark}
\begin{remark}Note that the poor performance of RRM with increasing $DR(\boldsymbol{\beta})$ is pronounced in the low to moderate SNR regime.  Note that with $q=2$, RRTA works exactly like RRT with $\alpha=PFD_{finite}=0.1$ in the low to moderate SNR regime.   Since, RRT is not affected by the value of  $DR(\boldsymbol{\beta})$, RRTA also will not be affected by high $DR(\boldsymbol{\beta})$. 
\end{remark}
\begin{remark}Note that we are setting  $\alpha^*=\min\left(PFD_{finite},\underset{k}{\min}RR(k)^q\right)$ instead of  $\alpha^*=\underset{k}{\min}RR(k)^q$. This is to ensure that  RRTA  work as RRT with parameter $\alpha^*=PFD_{finite}$  when the SNR is in the small to moderate regime. In other words, this particular form of $\alpha^*$ will help keep the values of $\alpha$ motivated by HSC arguments applicable only at high SNR and values of $\alpha$ motivated by finite SNR performance applicable to finite SNR situations. 

\end{remark}
Given the stochastic nature of the hyperparameter $\alpha^*$ in RRTA, it is difficult to derive finite SNR guarantees for RRTA. However, numerical simulations given in Section \rom{4} indicate that RRTA delivers a performance very close to that of RRT with parameter $\alpha=0.1$ and OMP with \textit{a priori} knowledge of $k_0$ or $\sigma^2$ in the low to moderately  high SNR regime.

\subsection{RRM and RRTA algorithms: A discussion}
In this section, we compare and contrast  RRM and RRTA  algorithms  proposed in this article with the results and algorithms discussed in existing literature.  These comparisons are organized as seperate remarks.  We first compare the RRTA and RRM algorithms with exisiting algorithms.
\begin{remark}
Note that  RRT  has a user specified parameter $\alpha$ which was set to $\alpha=0.1$ in \cite{icml} based on finite sample estimation and support recovery performance. This choice of $\alpha=0.1$ is also carried over to RRTA which uses $\alpha^*=\min\left(PFD_{finite},\underset{k}{\min}RR(k)^q \right)$ and $PFD_{finite}=0.1$. In addition to this, RRTA also involves a hyperparameter $q$ which is also set based on analytical results and empirical performance.   In contrast, RRM does not involve any user specified parameter. Hence, while RRT and RRTA are signal and noise statistics oblivious  algorithms, i.e., algorithms that does not require \textit{a priori} knowledge of $k_0$ or $\sigma^2$ for efficient finite or high SNR  operation, RRM is  both signal and noise statistics oblivious and hyper parameter free. Recently, there has been a significant interest in the development of such hyper parameter free algorithms. Significant developments in this area of research include algorithms related to sparse inverse covariance estimator, \textit{aka} SPICE  \cite{spicenote,spice_connection,spice,spice_like,Stoica20141}, sparse Bayesian learning (SBL)\cite{wipf2004sparse} etc. However, to the best of our knowledge, no explict finite sample and finite SNR support recovery guarantees are developed for SPICE, it's variants or SBL. The HSC of these algorithms are also not discussed in literature. In contrast, RRM is a hyper parameter free algorithm which is  high SNR consistent.  Further, RRM has explicit finite sample and finite SNR support recovery guarantees. Also, for signals with low dynamic range, the performance of RRM is shown analytically to be comparable with OMP having \textit{a priori} knowledge of $k_0$ or $\sigma^2$. 
\end{remark}
\begin{remark}
As aforementioned, all previous literature regarding  HSC in low dimensional and high dimensional regression models were applicable only to situations where the noise variance $\sigma^2$ is known \textit{a priori} or easily estimable.   In contrast, RRM and RRTA can achieve high SNR consistency   in underdetermined regression models even  without requiring an estimate of $\sigma^2$. To the best of our knowledge, RRM and RRTA  are the first and only noise statistics oblivious algorithms that are shown to be high SNR consistent in underdetermined regression models. 
\end{remark}
\begin{remark}
The HSC results developed in this article rely heavily on the bound $\mathbb{P}\left(RR(k)>\Gamma_{RRT}^{\alpha}(k),\forall k>k_{min}\right)\geq 1-\alpha$ in Lemma \ref{lemma:RR_properties}.  This bound is valid  iff the support sequence $\mathcal{S}_k$ is monotonic, i.e., $\mathcal{S}_k\subset\mathcal{S}_{k+1}$.  Unfortunately, the support sequences produced by sparse recovery algorithms like LASSO, SP, CoSaMP etc. are not monotonic.  Hence, the RRT technique in \cite{icml} and the RRM/RRTA techniques proposed in this article  are not applicable to non monotonic algorithms like LASSO, SP, CoSaMP etc. Developing versions of  RRT, RRM and RRTA that are applicable  to non monotonic algorithms like LASSO, CoSaMP, SP etc. is an important direction for future research. 
\end{remark}
Next we discuss the matrix regularity conditions involved in deriving RRTA and RRM algorothms. 
\begin{remark}\label{rem:mic}
Please note that the HSC of RRM and RRTA are derived  assuming only the existence of a matrix regularity condition which ensures that  $\mathcal{S}_{k_0}=\mathcal{S}$ once $\|{\bf w}\|_2\leq \epsilon$ for some $\epsilon>0$. RIC based regularity conditions are used in this article because they are the most widely used in analysing OMP. However, two other regularity conditions, viz., mutual incoherence condition  (MIC) and exact recovery condition (ERC) are also used for analysing OMP.  
  The mutual incoherence condition $\mu_{\bf X}=\underset{j\neq k}{\max}|{\bf X}_j^T{\bf X}_k| <\dfrac{1}{2k_0-1}$  along with  $\|{\bf w}\|_2\leq \dfrac{\boldsymbol{\beta}_{min}(1-(2k_0-1)\mu_{\bf X})}{2} $  implies that $\mathcal{S}_{k_0}=\mathcal{S}$. Similarly, the exact recovery condition (ERC) $\underset{j \notin \mathcal{S}}{\max}\|{\bf X}_{\mathcal{S}}^{\dagger}{\bf X}_j\|_1<1$ along with $\|{\bf w}\|_2\leq \dfrac{\boldsymbol{\beta}_{min}\lambda_{min}(1-\|{\bf X}_{\mathcal{S}}^{\dagger}{\bf X}_j\|_1)}{2}$
also ensures that $\mathcal{S}_{k_0}=\mathcal{S}$ \cite{cai2011orthogonal}. Consequently, both RRTA and RRM are high SNR consistent once MIC or ERC are satisfied. 
 \end{remark}
 \begin{figure*}
 \begin{multicols}{2}
    
    \includegraphics[width=1\linewidth]{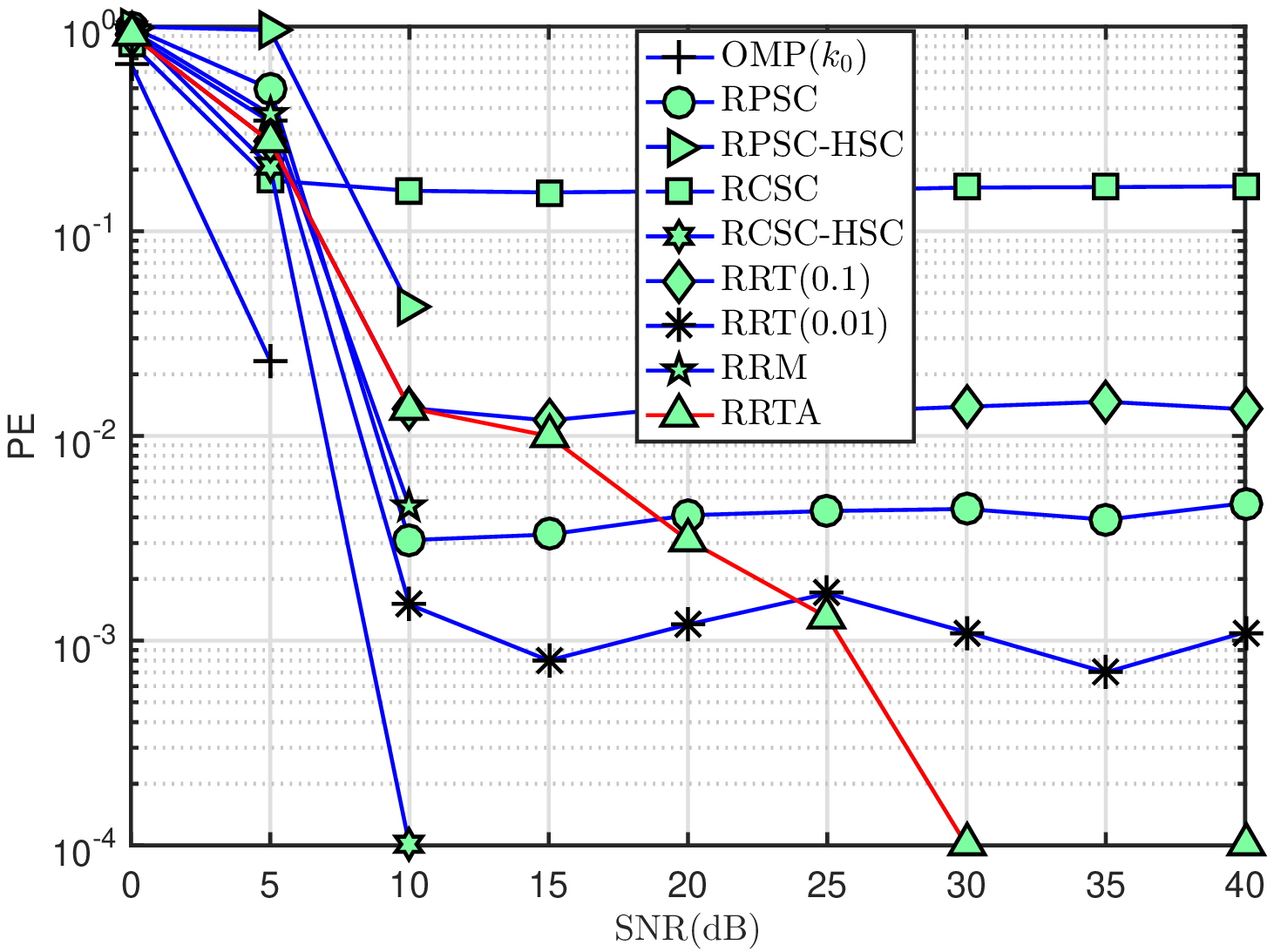}
    \caption*{ ${\bf X}=[{\bf I}_{32},{\bf H}_{32}]$, $\boldsymbol{\beta}_j=\pm1$.} 
    
    \includegraphics[width=1\linewidth]{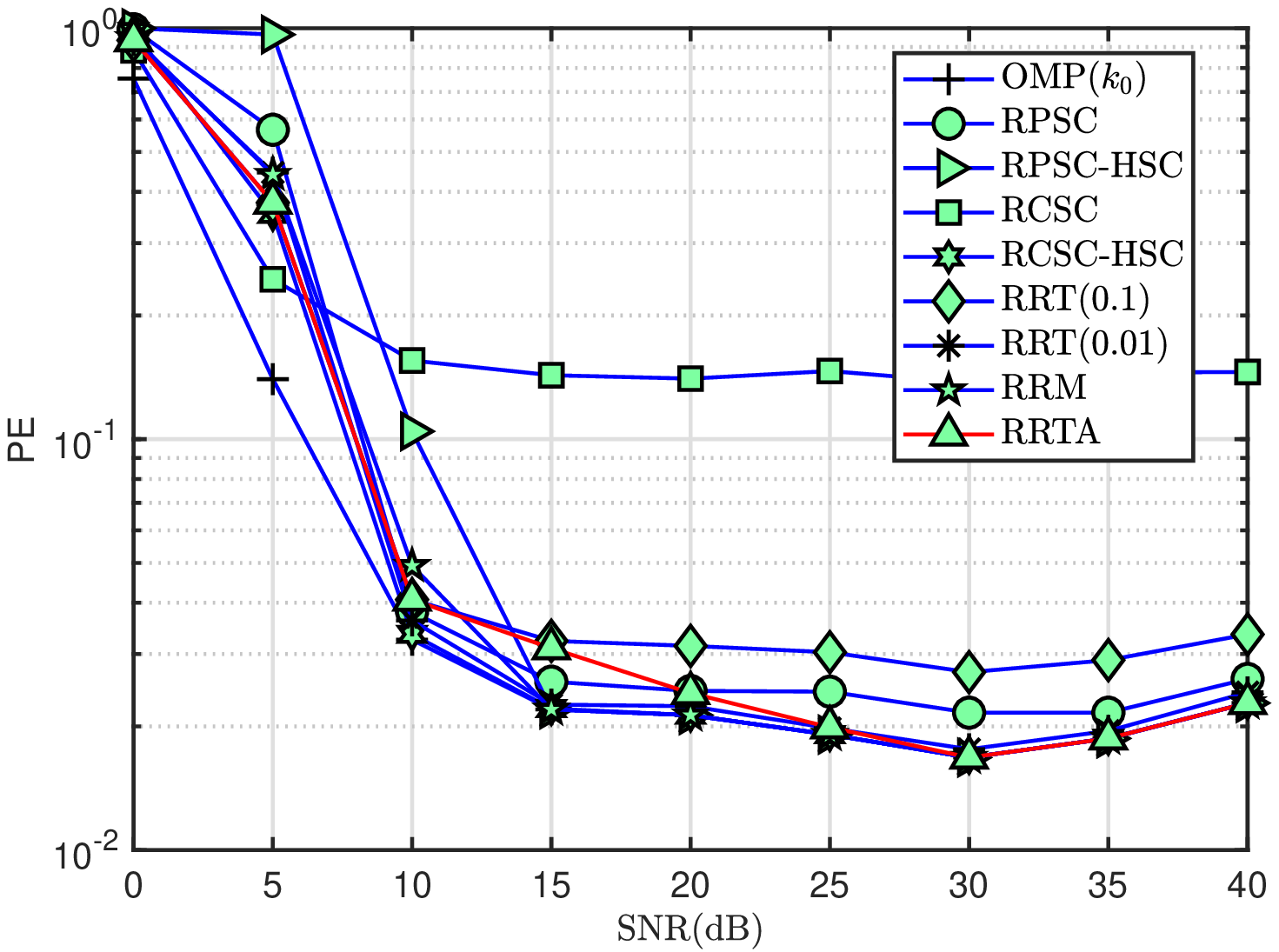} 
    \caption*{ ${\bf X}_{i,j}\sim \mathcal{N}(0,1/n)$, $\boldsymbol{\beta}_j=\pm1$.} 
    
      \end{multicols}
         \caption{Performance of RRM/RRTA when $DR(\boldsymbol{\beta})$ is low. }
         \label{fig:pm1}
   
   \end{figure*}

\begin{remark}\label{rem:random} When the matrix is generated by randomly sampling  from a $\mathcal{N}(0,1)$ distribution, then for every $\delta\in (0,0.36)$, $n\geq ck_0\log\left(\dfrac{p}{\delta}\right)$ where $c\leq 20$ is a  constant ensures that $\mathcal{S}_{k_0}=\mathcal{S}$ with a probability greater than $1-2\delta$ (when ${\bf w}={\bf 0}_n$) \cite{tropp2007signal}. Hence, even in the absence of noise ${\bf w}$, there is a fixed probability $\approx 2\delta$ that   $\mathcal{S}_{k_0}\neq \mathcal{S}$. This result implies that even OMP running exactly $k_0$ iterations cannot recover the true support with arbitrary high probability as $\sigma^2\rightarrow 0$ in such situations. Consequently, no OMP based scheme can be HSC when the matrix is randomly generated. However, numerical simulations indicate that the PE of RRM/RRTA and OMP with \textit{a priori} knowledge of $k_0$ match at high SNR, i.e., RRM  achieves the best possible performance that can be delivered by OMP. 
\end{remark}

\section{Numerical Simulations}

In this section, we numerically verify the HSC results derived for RRM and RRTA. We also evaluate and compare the finite SNR performance of RRM and RRTA with respect to other popular OMP based support recovery schemes.  We consider two models for the design matrix ${\bf X}$. Model 1 has ${\bf X}=[{\bf I}_n,{\bf H}_n]\in \mathbb{R}^{n \times 2n}$, i.e., ${\bf X}$ is formed by the concatenation of a $n\times n$ identity matrix and a  $n \times n$ Hadamard matrix. This matrix has $\mu_{\bf X}=\dfrac{1}{\sqrt{n}}$\cite{elad_book}. Consequently,  this matrix satisfies the MIC  $\mu_{\bf X}<\dfrac{1}{2k_0-1}$  for all vectors $\boldsymbol{\beta}\in \mathbb{R}^{2n}$
 with sparsity $k_0\leq \floor{\dfrac{1+\sqrt{n}}{2}}$. Model 2 is a $n \times p$ random matrix formed by sampling the entries independently from a $\mathcal{N}(0,1/n)$ distribution. For a given sparsity $k_0$, this matrix satisfies $\mathcal{S}_{k_0}=\mathcal{S}$ at high SNR with a high probability whenever $k_0 =O(\frac{n}{\log(p)})$.  Please note that there is a nonzero probability that a random matrix fails  to satisfy the regularity conditions required for support recovery. Hence, no OMP scheme is expected to be high SNR consistent in matrices of model 2. In both cases the dimensions are set as $n=32$ and $p=2n=64$. 
 
 \begin{figure*} 
    \begin{multicols}{2}
    \includegraphics[width=1\linewidth]{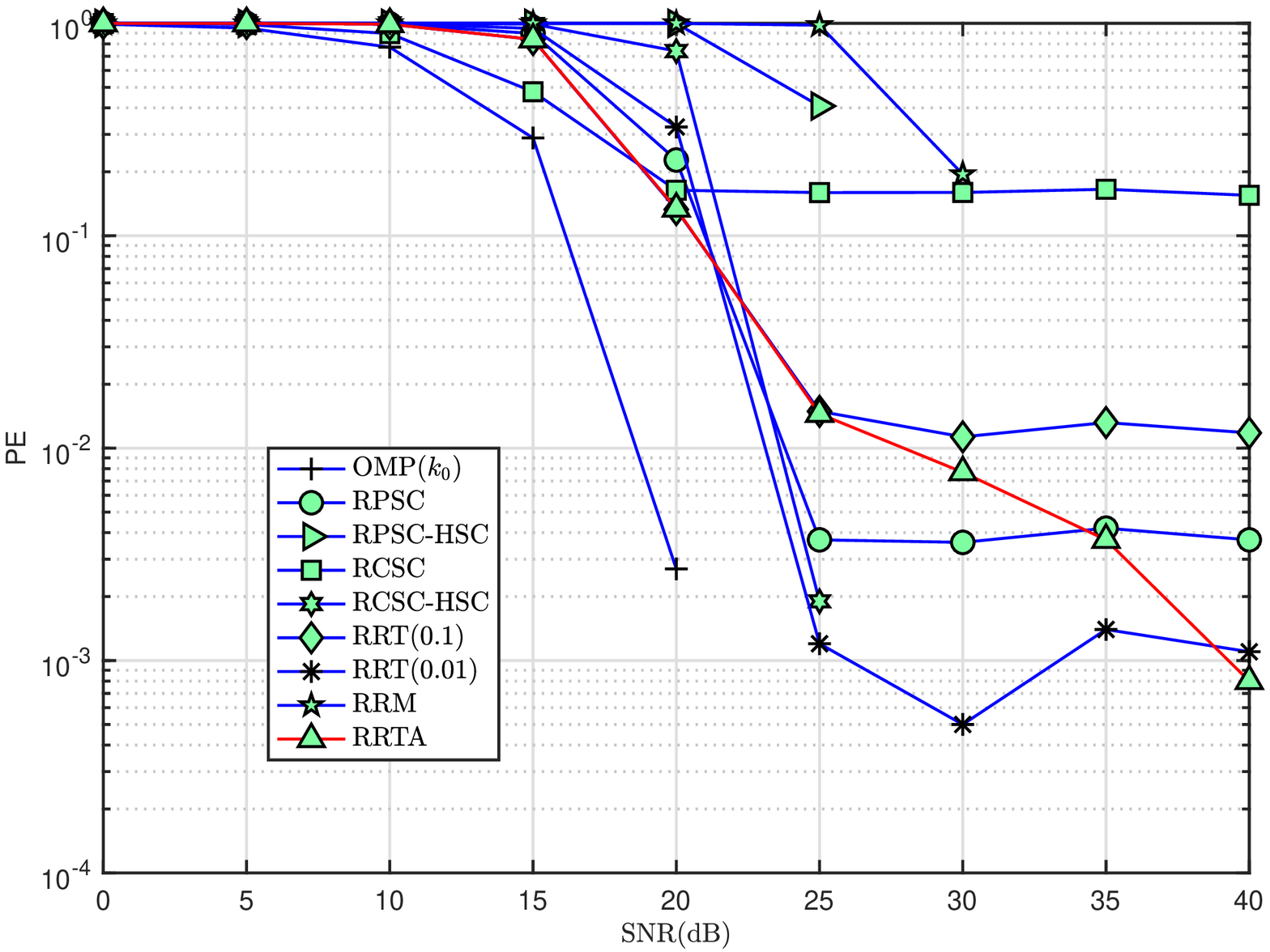}
    \caption*{ ${\bf X}=[{\bf I}_{32},{\bf H}_{32}]$, $\boldsymbol{\beta}_j\in\{1,1/3,1/9\}$.} 
    
    \includegraphics[width=1\linewidth]{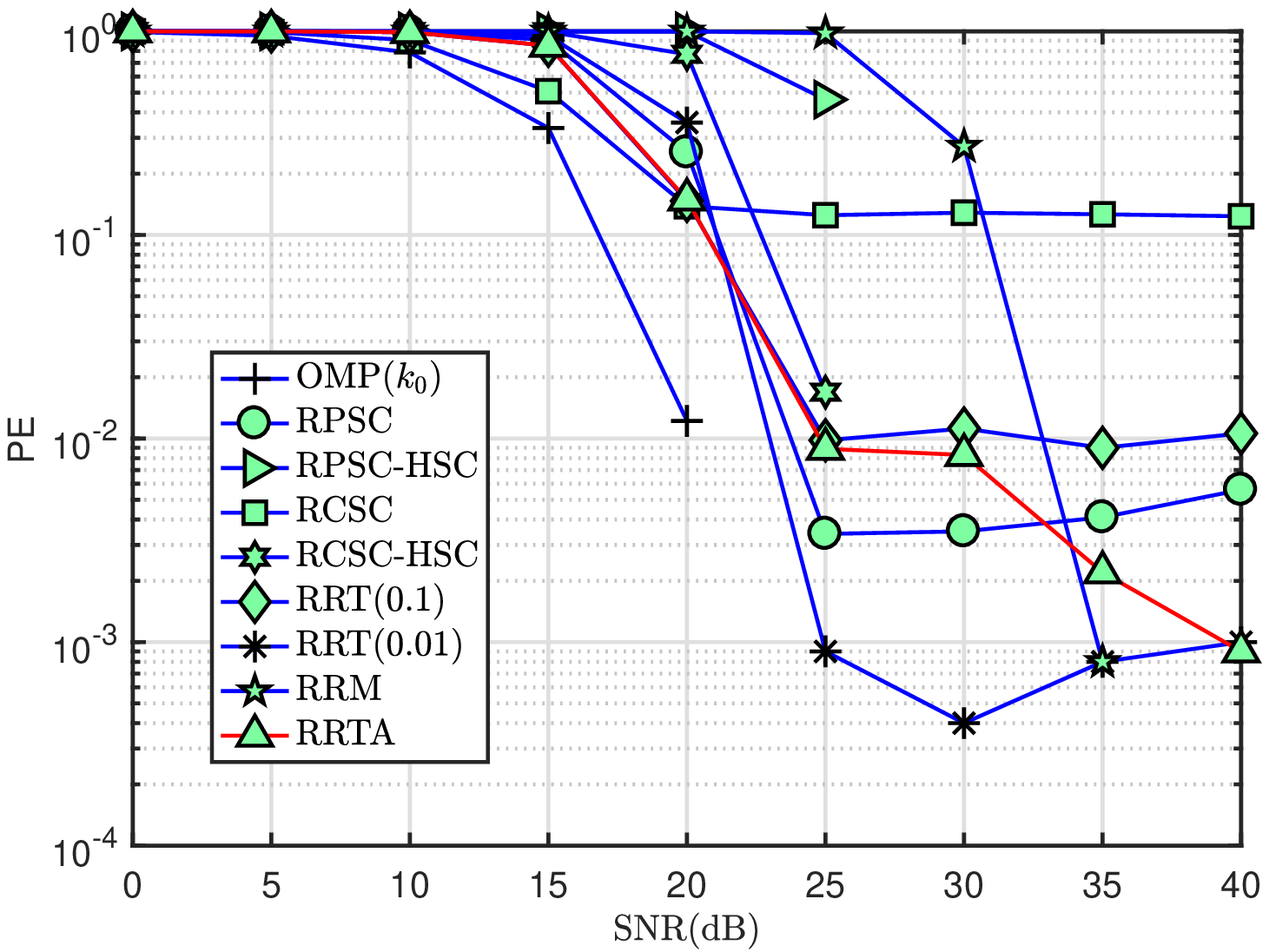} 
    \caption*{ ${\bf X}_{i,j}\sim \mathcal{N}(0,1/n)$, $\boldsymbol{\beta}_j \in \{1,1/3,1/9\}$.} 
  
   \end{multicols}
   \caption{Performance of RRM/RRTA when $DR(\boldsymbol{\beta})$ is high. }
   \label{fig:decay}
\end{figure*} 
\begin{figure*} 
    \begin{multicols}{2}
    
    \includegraphics[width=1\linewidth]{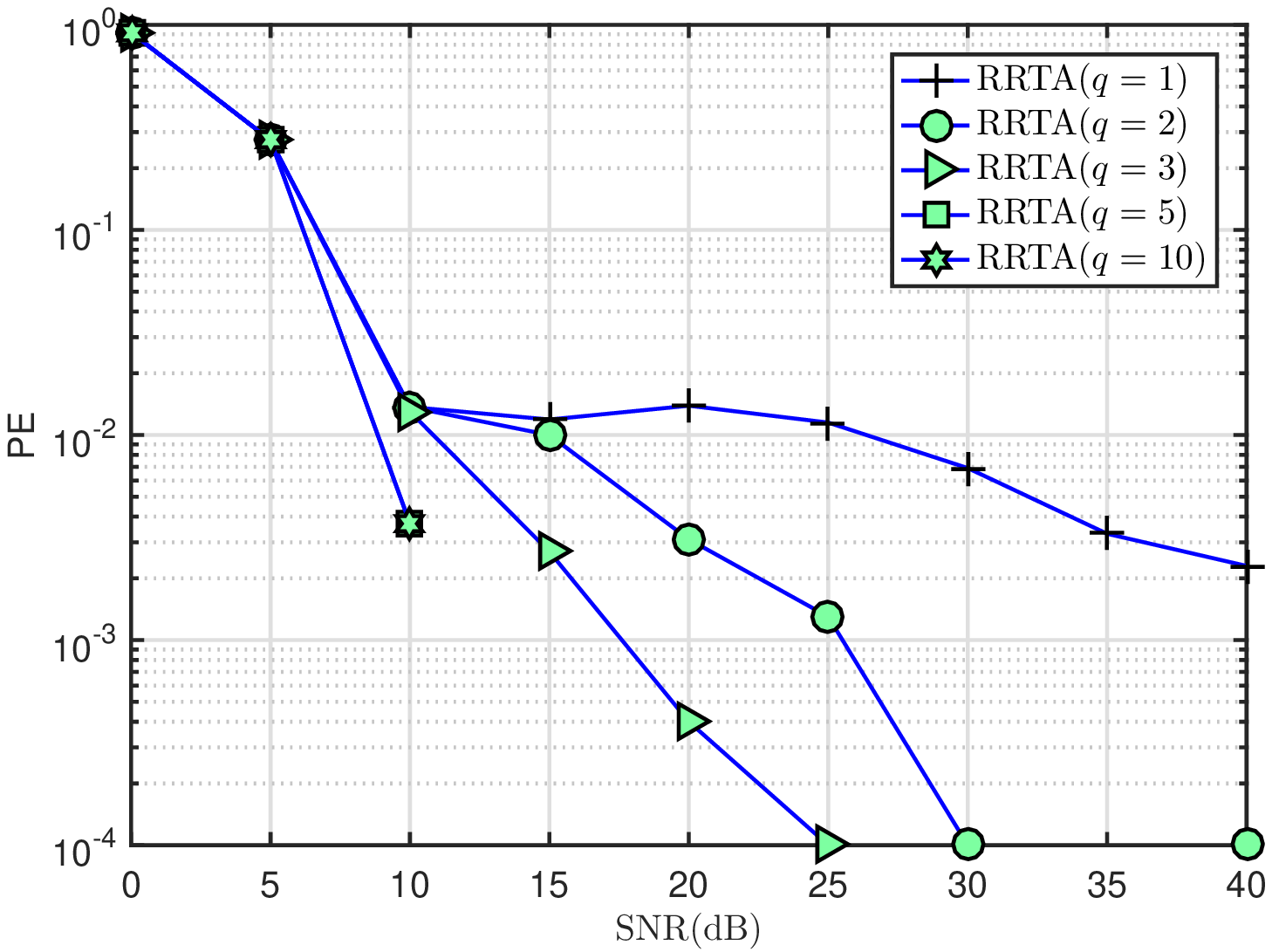}
    \caption*{ ${\bf X}=[{\bf I}_{32},{\bf H}_{32}]$, $\boldsymbol{\beta}_j=\pm 1$.} 
    
     \includegraphics[width=1\linewidth]{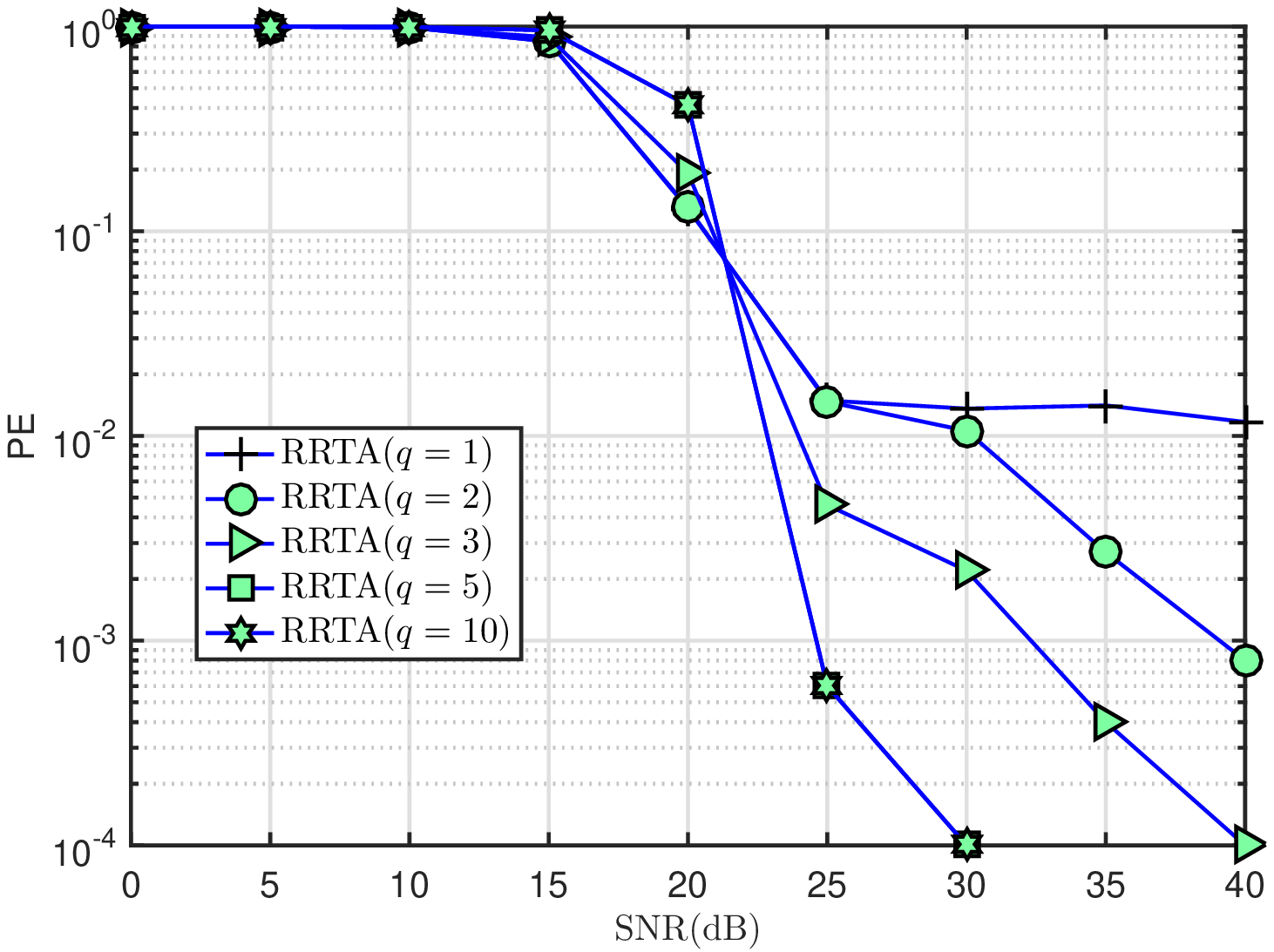} 
     \caption*{ ${\bf X}=[{\bf I}_{32},{\bf H}_{32}]$, $\boldsymbol{\beta}_j\in\{1,1/3,1/9\}$.} 
   
   \end{multicols}
   \caption{Effect of $q$ on the RRTA performance.}
   \label{fig:effectq}
\end{figure*}

%
%

Next we discuss   the regression vector $\boldsymbol{\beta}$ used in our experiments. The support $\mathcal{S}$  is  obtained by  randomly sampling $k_0$ entries  from the set $\{1,2,\dotsc,p\}$. Sparsity level $k_0$  is fixed at $k_0=3$. Since the performance of RRM is influenced by $DR(\boldsymbol{\beta})$,  we consider two types of  regression vectors $\boldsymbol{\beta}$ in our experiments with significantly different values of  $DR(\boldsymbol{\beta})$. For the first type of $\boldsymbol{\beta}$, the non zero entries  are randomly assigned $\pm 1$. When $\boldsymbol{\beta}_j=\pm 1$ for $j\in \mathcal{S}$, $DR(\boldsymbol{\beta})$ is at it's lowest value, i.e.,  one.  For the second type of $\boldsymbol{\beta}$,  the non zero entries are sampled from the set $\{1,1/3,\dotsc,1/3^{k_0-1}\}$ without resampling.  Here the dynamic range $DR(\boldsymbol{\beta})=3^{k_0-1}=9$ is very high.    All results presented in this section are obtained after performing $10^4$ Monte carlo iterations. 
%

 \subsection{Algorithms under consideration}
  Figures \ref{fig:pm1}-\ref{fig:effectq} present the $PE$ versus $SNR$ curves for  the four possible combinations of design matrix/regression vectors discussed earlier. OMP($k_0$) in Fig \ref{fig:pm1}-\ref{fig:effectq} represent the OMP scheme that performs exactly $k_0$ iterations. OMP with residual power stopping condition (RPSC)  stops iterations once $\|{\bf r}^{(k)}\|_2\leq \sigma\Gamma_1$. RPSC in Fig.\ref{fig:pm1}-\ref{fig:decay} represent this scheme with $\Gamma_1=\sqrt{n+2\sqrt{n\log(n)}}$\cite{cai2011orthogonal}. OMP with residual correlation stopping condition (RCSC)  stops iterations $\|{\bf X}^T{\bf r}^{(k)}\|_{\infty}\leq \sigma\Gamma_2$. RCSC in Fig.\ref{fig:pm1}-\ref{fig:decay} represent  this scheme with $\Gamma_2=\sqrt{2\log(p)}$. 
 RPSC-HSC and RCSC-HSC represent RPSC and RCSC with $\Gamma_1=\dfrac{1}{\sigma^{\eta}}\sqrt{n+2\sqrt{n\log(n)}}$ and  $\Gamma_2=\dfrac{1}{\sigma^{\eta}}\sqrt{2\log(p)}$ respectively\cite{cai2011orthogonal}. By Lemma \ref{lemma:latest_omp},  both RPSC-HSC and RCSC-HSC are high SNR consistent once $0<\eta<1$\cite{elsevier}.  In our simulations, we set  $\eta= 0.1$ as suggested in \cite{elsevier}. RRT($\alpha$) in Fig.\ref{fig:pm1}-\ref{fig:decay} represent RRT with $\alpha=0.1$ and $\alpha=0.01$. RRTA in Fig.\ref{fig:pm1}-\ref{fig:decay} represent RRTA with $\alpha^*=\min(0.1,\underset{k}{\min}RR(k)^2)$, i.e., the parameters $PFD_{finite}$ and $q$ are set to $0.1$  and $2$ respectively. RRTA($q= $) in Fig.\ref{fig:effectq} represents RRTA with $\alpha^*=\min(0.1,\underset{k}{\min}RR(k)^q)$.

 \subsection{Comparison of RRM/RRTA with existing OMP based support recovery techniques}
 
  Fig.\ref{fig:pm1} presents  the PE performance of algorithms when $DR(\boldsymbol{\beta})$ is low. When the  condition $\mu_{\bf X}<\dfrac{1}{2k_0-1}$  is met,  one can see from Fig.\ref{fig:pm1} that the PE of RRM, RRTA, RCSC-HSC, RPSC-HSC and OMP($k_0$) decreases to zero with increasing SNR. This validates the claims made in Lemma \ref{lemma:latest_omp},   Theorem \ref{thm:rmm_hsc} and Theorem \ref{thm:RRTA}.   Please note that unlike OMP($k_0$) which has \textit{a priori} knowledge of $k_0$ and RCSC-HSC and RPSC-HSC with \textit{a priori} knowledge of $\sigma^2$, RRM  and RRTA achieve HSC without having \textit{a priori} knowledge of either signal or noise statistics.   OMP($k_0$) achieves the best PE performance, whereas, the performance of other schmes are comparable to each other in the low to moderate SNR. This validates the claim made in Theorem \ref{thm:rrm_finite} that RRM performs similar to the noise statistics aware OMP schemes  when $DR(\boldsymbol{\beta})$ is low.  However, at high SNR, the rate at which the PE of RRTA converges to zero is lower than the rate at which the PE of RRM, RPSC-HSC etc. decrease to zero.   PE of RPSC, RCSC  and RRT are close to OMP($k_0$) at low SNR. However, the PE versus SNR curves of these algorithms  exhibit a tendency to  floor with increasing  SNR resulting in high SNR inconsistency.  When the design matrix ${\bf X}$  is randomly generated, no OMP based scheme achieves HSC. However, the PE level at which RRTA, RRM etc. floor is same as the PE level at which  OMP($k_0$), RPSC-HSC and RCSC-HSC floor. In other words,  when HSC is not achievable, RRTA and RRM will deliver a high SNR PE performance similar to the signal and noise statistics aware OMP schemes. 
  
  Next we consider the performance of algorithms when $DR(\boldsymbol{\beta})$ is high. Comparing Fig.\ref{fig:pm1} and Fig.\ref{fig:decay}, one can see that the $PE$ versus SNR curves for all algorithms shift towards the high SNR region with increasing $DR(\boldsymbol{\beta})$. Note that for a fixed SNR, $\boldsymbol{\beta}_{min}/\sigma$ decreases with increasing $DR(\boldsymbol{\beta})$. Since all OMP based schemes require $\boldsymbol{\beta}_{min}/\sigma$ to be sufficiently high, the relatively poor performance with increasing $DR(\boldsymbol{\beta})$ is expected. However, as one can see from Fig.\ref{fig:decay}, the deterioration in performance with increasing $DR(\boldsymbol{\beta})$ is very severe in RRM compared to other OMP based algorithms. This verifies the finite sample results derived in Theorem \ref{thm:rrm_finite} for RRM which states that RRM has poor finite SNR performance when $DR(\boldsymbol{\beta})$ is high. Note that the performance of RRTA with increased $DR(\boldsymbol{\beta})$ is similar to that of  OMP($k_0$), RPSC, RCSC etc. in the finite SNR regime. This implies that unlike RRM, the performance of RRTA depends only on $\boldsymbol{\beta}_{min}$ and not $DR(\boldsymbol{\beta})$.  
 \subsection{Effect of parameter $q$ on RRTA performance}

 Theorem \ref{thm:RRTA} states that RRTA with all values of $q$ satisfying $0<q<n-k_0$ are high SNR consistent. However, the finite SNR performance of RRTA with different values of  $q$ will be different. In Fig.\ref{fig:effectq}, we evaluate the performance of RRTA for different values of $q$. As one can see from Fig.\ref{fig:effectq}, the rate at which PE decreases to zero with increasing SNR becomes faster  with the increase in $q$. The rate at which the PE of RRTA with $q=10$ decreases to zero is similar to the steep decrease seen in the PE versus SNR curves of signal and noise statistics aware schemes like OMP($k_0$), RPSC-HSC, RCSC-HSC etc. In contrast, the rate at which the PE of RRTA with $q=1$  decreases to zero is not  steep. RRTA with $q=2$ exhibit  a much steeper  PE versus SNR curve. However, as one can see from the R.H.S of Fig.\ref{fig:effectq}, the finite SNR performance of RRTA with $q=1$ and $q=2$ is better than that of RRTA with $q=5$, $q=10$ etc. In other words, RRTA with a larger value of $q$ can potentially yield a better PE than RRTA with a smaller value of $q$ in the high SNR regime. However, this come at the cost of an  increased  PE in the  low to medium SNR regime.  Since the objective of any good HSC scheme should be to achieve HSC while guaranteeing good finite SNR performance, one can argue that $q=2$ is a good design  choice for the hyperparameter in RRTA.

 \section{Conclusion}
 This article proposes two novel techniques called RRM and RRTA to operate OMP without the \textit{a priori} knowledge of signal sparsity or noise variance. RRM is  hyperparameter free in the sense that it does not require any user specified tuning parameters, whereas, RRTA involve hyperparameters.  We analytically establish the HSC of both RRM and RRTA. Further, we also derive finite SNR guarantees for RRM. { Numerical simulations also verify the HSC of RRM and RRTA. RRM and RRTA are the first signal and noise statistics oblivious techniques to report HSC in underdetermined regression models. }
 \section*{Appendix A: Proof of Lemma \ref{lemma:inconsistency} }
 \begin{proof}
 The event $\mathcal{S}_{RRT}\supset \mathcal{S}$ in RRT estimate $k_{RRT}=\max\{k:RR(k)<\Gamma_{RRT}^{\alpha}(k)\}$ is true once $\exists k>k_{min}$ such that $RR(k)<\Gamma_{RRT}^{\alpha}(k)$. Hence, 
 \begin{equation}\label{eq:inconsistency1}
 \mathbb{P}(\mathcal{S}_{RRT}\supset \mathcal{S}) \geq  \mathbb{P}\left(\underset{k>k_{min}}{\bigcup}\{RR(k)<\Gamma_{RRT}^{\alpha}(k)\}\right).
 \end{equation}
 One can further bound (\ref{eq:inconsistency1}) as follows. 
 \begin{equation}\label{eq:inconsistency2}
 \begin{array}{ll}
 \mathbb{P}(\mathcal{S}_{RRT}\supset \mathcal{S}) \overset{(a)}{\geq}  \mathbb{P}\big(\{RR(k_{min}+1)<\Gamma_{RRT}^{\alpha}(k_{min}+1)\} \big) \\ \overset{(b)}{\geq} \mathbb{P}\big(\{RR(k_{min}+1)<\Gamma_{RRT}^{\alpha}(k_{min}+1)\}\cap \{k_{min}=k_0\}\big) \\
 = \mathbb{P}\left(\{RR(k_{min}+1)<\Gamma_{RRT}^{\alpha}(k_{min}+1)\} | \{k_{min}=k_0\}\right) \\
 \ \ \ \ \ \ \ \ \ \ \ \ \ \ \ \ \ \ \ \ \ \ \mathbb{P}(k_{min}=k_0).
 \end{array}
 \end{equation}
 { (a) of \ref{eq:inconsistency2} follows from the union bound $\mathbb{P}(A \cup B)\geq \mathbb{P}(A)$ and (b) follows from the intersection bound $\mathbb{P}(A\cap B)\leq \mathbb{P}(A)$.}
 Following the proof of Theorem 1 in \cite{icml}, we know that conditional on $k_{min}=j$, for each $k>j$, $RR(k)<Z_k$ where  $Z_k \sim \mathbb{B}(\frac{n-k}{2},\frac{1}{2})$. Applying this distributional result in $\mathbb{P}\left(\{RR(k_{min}+1)<\Gamma_{RRT}^{\alpha}(k_{min}+1)\}\right)$  gives
 \begin{equation}\label{eq:inconsistency3}
 \begin{array}{ll}
\mathbb{P}\left(\{RR(k_{min}+1)<\Gamma_{RRT}^{\alpha}(k_{min}+1)\}\right) \\ 
\  \ \ \ \ \ \ \ \ \ \ \ \ \ \geq \mathbb{P}\big(\{Z_{k_0+1} < \Gamma_{RRT}^{\alpha}(k_0+1)\}\big)\\
\  \ \ \ \ \ \ \ \ \ \ \ \ \ =F_{\frac{n-k_0-1}{2},\frac{1}{2}}\left(\Gamma_{RRT}^{\alpha}(k_0+1)\right)\\
\  \ \ \ \ \ \ \ \ \ \ \ \ \ =F_{\frac{n-k_0-1}{2},\frac{1}{2}}\left(F^{-1}_{\frac{n-k_0-1}{2},\frac{1}{2}}\left(\frac{\alpha}{k_{max}(p-k_0)}\right)\right)\\
\  \ \ \ \ \ \ \ \ \ \ \ \ \ =\frac{\alpha}{k_{max}(p-k_0)}  
\end{array}
 \end{equation}
Using Lemma \ref{lemma:latest_omp}, we know that $k_{min}=k_0$ once $\|{\bf w}\|_2\leq\epsilon_{omp}$. Hence, $\mathbb{P}(k_{min}=k_0)\geq \mathbb{P}(\|{\bf w}\|_2\leq \epsilon_{omp})$. Since ${\bf w} \overset{P}{\rightarrow} {\bf 0}_n$ as $\sigma^2\rightarrow 0$ for ${\bf w}\sim \mathcal{N}({\bf 0}_n,\sigma^2{\bf I}_n)$, we have  $\underset{\sigma^2\rightarrow 0}{\lim}\mathbb{P}(\|{\bf w}\|_2\leq \epsilon_{omp})=1$ and $\underset{\sigma^2\rightarrow 0}{\lim}\mathbb{P}(k_{min}=k_0)=1$. Substituting $\underset{\sigma^2\rightarrow 0}{\lim}\mathbb{P}(k_{min}=k_0)=1$ and (\ref{eq:inconsistency3}) in (\ref{eq:inconsistency2}) gives $\underset{\sigma^2 \rightarrow 0}{\lim}\mathbb{P}(\mathcal{S}_{RRT}\supset \mathcal{S})\geq \frac{\alpha}{k_{max}(p-k_0)}$. 
 \end{proof}
 \section*{Appendix B: Proof of Lemma \ref{lemma:kless}}
 \begin{proof}
 By Lemma \ref{lemma:latest_omp}, we have $k_{min}=k_0$ and $\mathcal{S}_{k_0}=\mathcal{S}$ once $\|{\bf w}\|_2\leq \epsilon_{omp}$.   This along with the monotonicity of $\mathcal{S}_k$ implies that $\mathcal{S}_k\subset  \mathcal{S}$ for each $k<k_0$. We analyse $RR(k)$ assuming that  $\|{\bf w}\|_2\leq \epsilon_{omp}$.   Applying  the triangle inequality $\|{\bf a}+{\bf b}\|_2\leq \|{\bf a}\|_2+\|{\bf b}\|_2$, the reverse triangle inequality $\|{\bf a}+{\bf b}\|_2\geq \|{\bf a}\|_2-\|{\bf b}\|_2$  and the bound $\|({\bf I}_n-{\bf P}_{k}){\bf w}\|_2\leq \|{\bf w}\|_2 $ to  $\|{\bf r}^{k}\|_2=\|({\bf I}_n-{\bf P}_{k}){\bf X}\boldsymbol{\beta}+({\bf I}_n-{\bf P}_{k}){\bf w}\|_2$  gives 
\begin{equation}\label{aaa1}
\|({\bf I}_n-{\bf P}_{k}){\bf X}\boldsymbol{\beta}\|_2-\|{\bf w}\|_2\leq \|{\bf r}^{k}\|_2\leq \|({\bf I}_n-{\bf P}_{k}){\bf X}\boldsymbol{\beta}\|_2+\|{\bf w}\|_2.
\end{equation}
Let ${u^k}=\mathcal{S}/\mathcal{S}_k$  denotes the indices in $\mathcal{S}$ that are not selected after the $k^{th}$ iteration. 
Note that ${\bf X}\boldsymbol{\beta}={\bf X}_{\mathcal{S}}\boldsymbol{\beta}_{\mathcal{S}}={\bf X}_{\mathcal{S}_k}\boldsymbol{\beta}_{\mathcal{S}_k}+{\bf X}_{u^k}\boldsymbol{\beta}_{u^k}$. Since, ${\bf I}_n-{\bf P}_k$ projects orthogonal to the column space $span({\bf X}_{\mathcal{S}_k})$, $({\bf I}_n-{\bf P}_{k}){\bf X}_{\mathcal{S}_k}\boldsymbol{\beta}_{\mathcal{S}_k}={\bf 0}_n$.
Hence, $({\bf I}_n-{\bf P}_{k}){\bf X}\boldsymbol{\beta}=({\bf I}_n-{\bf P}_{k}){\bf X}_{u^k}\boldsymbol{\beta}_{u^k}$. Further, $\mathcal{S}_k\subset \mathcal{S}$ implies that  $card(\mathcal{S}_k)+card(u^k)= k_0$ and $\mathcal{S}_k\cap u^k=\phi$. Hence, by  Lemma 2 of \cite{latest_omp},
\begin{equation}\label{aaa2}
\sqrt{1-\delta_{k_0}}\|\boldsymbol{\beta}_{u^k}\|_2\leq \|({\bf I}_n-{\bf P}_{k}){\bf X}_{u^k}\boldsymbol{\beta}_{u^k}\|_2\leq \sqrt{1+\delta_{k_0}}\|\boldsymbol{\beta}_{u^k}\|_2.
\end{equation} 
Substituting (\ref{aaa2}) in (\ref{aaa1}) gives
\begin{equation}\label{Caibound}
\sqrt{1-\delta_{k_0}}\|\boldsymbol{\beta}_{u^k}\|_2-\|{\bf w}\|_2\leq\|{\bf r}^{k}\|_2\leq \sqrt{1+\delta_{k_0}}\|\boldsymbol{\beta}_{u^k}\|_2+\|{\bf w}\|_2.
\end{equation}
Note that $\boldsymbol{\beta}_{u^{k-1}}=\boldsymbol{\beta}_{u^{k}}+\boldsymbol{\beta}_{u^{k-1}/u^{k}}$
after appending enough zeros in appropriate locations. $\boldsymbol{\beta}_{u^{k-1}/u^{k}}$  has only one non zero entry.  Hence, $\|\boldsymbol{\beta}_{u^{k-1}/u^{k}}\|_2\leq \boldsymbol{\beta}_{max}$. Applying triangle inequality to $\boldsymbol{\beta}_{u^{k-1}}=\boldsymbol{\beta}_{u^{k}}+\boldsymbol{\beta}_{u^{k-1}/u^{k}}$ gives
the bound 
\begin{equation}\label{temp_bound}
\|\boldsymbol{\beta}_{u^{k-1}}\|_2\leq \|\boldsymbol{\beta}_{u^{k}}\|_2+\|\boldsymbol{\beta}_{u^{k-1}/u^{k}}\|_2 \leq  \|\boldsymbol{\beta}_{u^{k}}\|_2+  \boldsymbol{\beta}_{max}
\end{equation}
Applying (\ref{temp_bound}) and (\ref{Caibound}) in $RR(k)$ for $k<k_0$ gives
\begin{equation}\label{A1bound}
\begin{array}{ll}
RR(k)=\dfrac{\|{\bf r}^{k}\|_2}{\|{\bf r}^{k-1}\|_2} &\geq \dfrac{\sqrt{1-\delta_{k_0}}\|\boldsymbol{\beta}_{u^k}\|_2-\|{\bf w}\|_2}{\sqrt{1+\delta_{k_0}}\|\boldsymbol{\beta}_{u^{k-1}}\|_2+\|{\bf w}\|_2}\\
&\geq \dfrac{\sqrt{1-\delta_{k_0}}\|\boldsymbol{\beta}_{u^k}\|_2-\|{\bf w}\|_2}{\sqrt{1+\delta_{k_0}}\left[\|\boldsymbol{\beta}_{u^{k}}\|_2+\boldsymbol{\beta}_{max}\right]+\|{\bf w}\|_2}\\
\end{array}
\end{equation}
whenever $\|{\bf w}\|_2\leq \epsilon_{omp}$. The R.H.S of (\ref{A1bound}) can be rewritten as
\begin{equation}\label{A1bound2}
\begin{array}{ll}
\dfrac{\sqrt{1-\delta_{k_0}}\|\boldsymbol{\beta}_{u^k}\|_2-\|{\bf w}\|_2}{\sqrt{1+\delta_{k_0}}
\left[\|\boldsymbol{\beta}_{u^{k}}\|_2+\boldsymbol{\beta}_{max}\right]+\|{\bf w}\|_2}=\dfrac{\sqrt{1-\delta_{k_0}}}{\sqrt{1+\delta_{k_0}}}  \\  \ \ \ \ \ \ \ \ \ \ \ 
 \left(1-\dfrac{\dfrac{\|{\bf w}\|_2}{\sqrt{1-\delta_{k_0}}}+\dfrac{\|{\bf w}\|_2}{\sqrt{1+\delta_{k_0}}}+\boldsymbol{\beta}_{max}}{\|\boldsymbol{\beta}_{u^{k}}\|_2+\boldsymbol{\beta}_{max}+\dfrac{\|{\bf w}\|_2}{\sqrt{1+\delta_{k_0}}}}\right)
\end{array}
\end{equation}
From (\ref{A1bound2}) it is clear that the R.H.S of (\ref{A1bound}) decreases with decreasing $\|\boldsymbol{\beta}_{u^k}\|_2$.  Note  that the minimum value of $\|\boldsymbol{\beta}_{u^{k}}\|_2$ is $\boldsymbol{\beta}_{min}$ itself. Hence, substituting $\|\boldsymbol{\beta}_{u^{k}}\|_2\geq \boldsymbol{\beta}_{min}$ in (\ref{A1bound})   gives 
\begin{equation}\label{lb_on_klessk0_temp}
RR(k)\geq \dfrac{\sqrt{1-\delta_{k_0}}\boldsymbol{\beta}_{min}-\|{\bf w}\|_2}{\sqrt{1+\delta_{k_0}}(\boldsymbol{\beta}_{max}+\boldsymbol{\beta}_{min})+\|{\bf w}\|_2},
\end{equation}
$\forall k<k_0$ whenever  $\|{\bf w}\|_2<\epsilon_{omp}$.  
This along with the fact $RR(k)\geq 0$ implies that 
\begin{equation}
RR(k)\geq \dfrac{\sqrt{1-\delta_{k_0}}\boldsymbol{\beta}_{min}-\|{\bf w}\|_2}{\sqrt{1+\delta_{k_0}}(\boldsymbol{\beta}_{max}+\boldsymbol{\beta}_{min})+\|{\bf w}\|_2} \mathcal{I}_{\{\|{\bf w}\|_2\leq \epsilon_{omp}\}},
\end{equation}
where $\mathcal{I}_{\{\mathcal{E}\}}$ is the indicator function returning one when the event $\mathcal{E}$ occurs and zero otherwise.  
Note that $\|{\bf w}\|_2\overset{P}{\rightarrow }0$ as $\sigma^2 \rightarrow 0$. This implies that $\dfrac{\sqrt{1-\delta_{k_0}}\boldsymbol{\beta}_{min}-\|{\bf w}\|_2}{\sqrt{1+\delta_{k_0}}(\boldsymbol{\beta}_{max}+\boldsymbol{\beta}_{min})+\|{\bf w}\|_2} \overset{P}{\rightarrow } \dfrac{\sqrt{1-\delta_{k_0}}\boldsymbol{\beta}_{min}}{\sqrt{1+\delta_{k_0}}(\boldsymbol{\beta}_{max}+\boldsymbol{\beta}_{min})}$ and $\mathcal{I}_{\{\|{\bf w}\|_2\leq \epsilon_{omp}\}} \overset{P}{\rightarrow}1$. Substituting these bounds in (\ref{lb_on_klessk0_temp}) one can obtain  $\underset{\sigma^2\rightarrow 0}{\lim}\mathbb{P}\left(RR(k)>\dfrac{\sqrt{1-\delta_{k_0}}\boldsymbol{\beta}_{min}}{\sqrt{1+\delta_{k_0}}(\boldsymbol{\beta}_{max}+\boldsymbol{\beta}_{min})}\right)=1$. 
\end{proof}
\section*{Appendix C: Proof of Theorem \ref{thm:rrm_finite} }
 \begin{proof}
 For RRM support estimate $\mathcal{S}_{RRM}=\mathcal{S}_{k_{RRM}}$ where $k_{RRM}=\underset{k}{\arg\min}RR(k)$ to satisfy $\mathcal{S}_{RRM}=\mathcal{S}$, it is sufficient that the  following four events $\mathcal{A}_1$, $\mathcal{A}_2$, $\mathcal{A}_3$  and $\mathcal{A}_4$ occur simultaneously. \\
 $\mathcal{A}_1=\{k_{min}=k_{0}\}$. \\
 $\mathcal{A}_2=\{RR(k)>RR(k_0),\forall k<k_{0}\}$. \\
 $\mathcal{A}_3=\{RR(k_{0})<\underset{j=1,\dotsc,k_{max}}{\min}\Gamma_{RRT}^{\alpha}(j)\}$. \\
 $\mathcal{A}_4=\{RR(k)>\underset{j=1,\dotsc,k_{max}}{\min}\Gamma_{RRT}^{\alpha}(j),\forall k>k_{min} \}$.
 
This is explained as follows.  Event $\mathcal{A}_1$ true implies that $\mathcal{S}_{k_{0}}=\mathcal{S}$ and $k_{min}=k_{0}$.  
$\mathcal{A}_1 \cap \mathcal{A}_2$ is true implies that $k_{RRM}\geq k_{min}$, i.e.,  $k_{RRM}$ will not underestimate $k_{min}$. Event $\mathcal{A}_3\cap  \mathcal{A}_4$ implies that $RR(k_{min})>RR(k)$ for all $k>k_{min}$  which ensures that  $k_{RRM}\leq k_{min}$, i.e., $k_{RRM}$ will not overestimate $k_{min}$. Hence, $\mathcal{A}_2\cap \mathcal{A}_3\cap \mathcal{A}_4$ implies that $k_{RRM}=k_{min}$. This together with $\mathcal{A}_1$  implies that $k_{RRM}=k_0$ and $\mathcal{S}_{RRM}=\mathcal{S}$.  Hence, $\mathbb{P}(\mathcal{S}_{RRM}=\mathcal{S})\geq \P\left(\mathcal{A}_1\cap \mathcal{A}_2\cap \mathcal{A}_3\cap \mathcal{A}_4\right)$.  

By Lemma \ref{lemma:latest_omp}, it is true that $\mathcal{A}_1=\{k_{min}=k_{0}\}$ is true  once $\|{\bf w}\|_2\leq \epsilon_{omp}$. { Using the bound $\sqrt{1-\delta_{k_0}}\|\boldsymbol{\beta}_{u^k}\|_2-\|{\bf w}\|_2\leq\|{\bf r}^{k}\|_2\leq \sqrt{1+\delta_{k_0}}\|\boldsymbol{\beta}_{u^k}\|_2+\|{\bf w}\|_2$ from (\ref{Caibound}) in the proof of Lemma \ref{lemma:kless} and the fact that $u_{k_0}=\emptyset$, we have $\|{\bf r}^{k_0}\|_2\leq \|{\bf w}\|_2$  and  $ \|{\bf r}^{k_0-1}\|_2 \geq \sqrt{1-\delta_{k_0}}\|\boldsymbol{\beta}_{u^{k_0-1}}\|_2-\|{\bf w}\|_2  \geq  \sqrt{1-\delta_{k_0}}\boldsymbol{\beta}_{min}-\|{\bf w}\|_2 $. } Substituting these bounds in $RR(k_0)=\frac{\|{\bf r}^{k_0}\|_2}{\|{\bf r}^{k_0-1}\|_2}$  gives
\begin{equation}
RR(k_{0})=RR(k_{min})\leq \dfrac{\|{\bf w}\|_2}{\sqrt{1-\delta_{k_{0}}}\boldsymbol{\beta}_{min}-\|{\bf w}\|_2},
\end{equation}
once $\|{\bf w}\|_2 \leq \epsilon_{omp}$. Hence the event $\mathcal{A}_3$, i.e., $RR(k_{0})<\underset{j=1,\dotsc,k_{max}}{\min}\Gamma_{RRT}^{\alpha}(j)$ is true once  
\begin{equation}
\dfrac{\|{\bf w}\|_2}{\sqrt{1-\delta_{k_{0}}}\boldsymbol{\beta}_{min}-\|{\bf w}\|_2}<\underset{j=1,\dotsc,k_{max}}{\min}\Gamma_{RRT}^{\alpha}(j)
\end{equation}
which in turn is true once $\|{\bf w}\|_2\leq \min(\epsilon_{omp},\tilde{\epsilon}_{rrt})$. 
 
Next we consider the event $\mathcal{A}_2$. From (\ref{lb_on_klessk0_temp}), we have $RR(k)\geq \dfrac{\sqrt{1-\delta_{k_0}}\boldsymbol{\beta}_{min}-\|{\bf w}\|_2}{\sqrt{1+\delta_{k_0}}(\boldsymbol{\beta}_{max}+\boldsymbol{\beta}_{min})+\|{\bf w}\|_2}, \ \forall k<k_0$ whenever  $\|{\bf w}\|_2<\epsilon_{omp}$. Hence, $RR(k_{0})<\underset{k<k_{0}}{\min}RR(k)$ is true once the lower bound $\dfrac{\sqrt{1-\delta_{k_0}}\boldsymbol{\beta}_{min}-\|{\bf w}\|_2}{\sqrt{1+\delta_{k_0}}(\boldsymbol{\beta}_{max}+\boldsymbol{\beta}_{min})+\|{\bf w}\|_2}$ on $RR(k)$ for $k<k_0$ is higher than the upper bound $\dfrac{\|{\bf w}\|_2}{\sqrt{1-\delta_{k_{0}}}\boldsymbol{\beta}_{min}-\|{\bf w}\|_2}$ on $RR(k_0)$. This is true  once $\|{\bf w}\|_2\leq \min(\epsilon_{omp},\epsilon_{rrm})$. Consequently, events $\mathcal{A}_1\cap\mathcal{A}_2\cap\mathcal{A}_3$ occur simultaneously once $\|{\bf w}\|_2\leq \min(\epsilon_{omp},\tilde{\epsilon_{rrt}},\epsilon_{rrm})$. Since $\epsilon_{\sigma}=\sigma\sqrt{n+2\sqrt{n\log(n)}}$ satisfies $\P(\|{\bf w}\|_2\leq \epsilon_{\sigma})\geq 1-1/n$, it is true that $\P(\mathcal{A}_1\cap\mathcal{A}_2\cap\mathcal{A}_3)\geq 1-1/n$ once 
$\epsilon_{\sigma}\leq \min(\epsilon_{omp},\tilde{\epsilon_{rrt}},\epsilon_{rrm})$.  

Next we consider the event $\mathcal{A}_4$.  Following Lemma \ref{lemma:RR_properties}, it is true that 
\begin{equation}
\begin{array}{ll}
\mathbb{P}(RR(k)>\underset{j=1,\dotsc,k_{max}}{\min}\Gamma_{RRT}^{\alpha}(j),\forall k>k_{min})\\ 
\ \ \ \ \ \ \ \ \ \ \ \geq \P(RR(k)>\Gamma_{RRT}^{\alpha}(k),\forall k>k_{min})\geq 1-\alpha, 
\end{array}
\end{equation}
for all $\sigma^2>0$.
 Hence,  the event $\mathcal{A}_4$ occurs with probability atleast $1-\alpha$, $\forall \sigma^2>0$. 

 Combining all these results give $\mathbb{P}(\mathcal{S}_{RRM}=\mathcal{S})\geq 1-1/n-\alpha$ whenever $\epsilon_{\sigma}\leq \min(\tilde{\epsilon_{rrt}},\epsilon_{rrm},\epsilon_{omp})$.

 \end{proof}

 \section*{Appendix D: Proof of Theorem \ref{thm:rmm_hsc} }
 \begin{proof}
 To prove that $\underset{\sigma^2\rightarrow 0}{\lim}\mathbb{P}(\mathcal{S}_{RRM}=\mathcal{S})=1$, it is sufficient to show that for every fixed $\delta>0$, there exists a $\sigma^2_{\delta}>0$ such that $\mathbb{P}(\mathcal{S}_{RRM}=\mathcal{S})\geq 1-\delta$ for all $\sigma^2<\sigma^2_{\delta}$.  Consider the events $\{\mathcal{A}_j\}_{j=1}^4$ with the same definition as in Appendix C. Then $\mathbb{P}(\mathcal{S}_{RRM}=\mathcal{S})\geq \mathbb{P}(\mathcal{A}_1\cap \mathcal{A}_2\cap \mathcal{A}_3\cap \mathcal{A}_4)$. Let $\delta>0$ be any given number.  Fix the alpha parameter $\alpha=\frac{\delta}{2}$. Applying  Lemma \ref{lemma:RR_properties} with $\alpha=\frac{\delta}{2}$ gives the bound 
 \begin{equation}\label{eq:rrm_hsc1}
 \mathbb{P}(\mathcal{A}_4)=\mathbb{P}(RR(k)>\underset{j=1,\dotsc,k_{max}}{\min}\Gamma_{RRT}^{\frac{\delta}{2}}(j), \forall k>k_{min})\geq 1-\frac{\delta}{2},
 \end{equation}
  for all $\sigma^2>0$.  Following the proof of Theorem \ref{thm:rrm_finite}, we have 
\begin{equation}
\mathbb{P}(\mathcal{A}_1\cap \mathcal{A}_2\cap \mathcal{A}_3)\geq \mathbb{P}\left(\|{\bf w}\|_2\leq \min(\epsilon_{rrm},\tilde{\epsilon}_{rrt},\epsilon_{omp})\right).
\end{equation} 
{ Note that both $\epsilon_{rrm}>0$ and $\epsilon_{omp}>0$ are both independent of $\alpha$  and hence $\delta$. At the same time,  $\tilde{\epsilon}_{rrt}=\dfrac{\underset{1<k\leq k_{max}}{\min}\Gamma_{RRT}^{\alpha}(k)\sqrt{1-\delta_{{k_{0}}}}\boldsymbol{\beta}_{min}}
{1+\underset{1\leq k\leq k_{max}}{\min}\Gamma_{RRT}^{\alpha}(k)}$ is dependent on $\alpha$ and hence $\delta$.   Since $\mathbb{B}(a,b)$ is a continuous random variable with support in $(0,1)$, for every $z>0$, $a>0$ and $b>0$ , $F^{-1}_{a,b}(z)>0$. Hence, for each $\delta>0$,  $\Gamma_{RRT}^{\frac{\delta}{2}}(k)>0$ which implies that $\tilde{\epsilon}_{rrt}>0$. This inturn implies that   $\min(\epsilon_{rrm},\tilde{\epsilon}_{rrt},\epsilon_{omp})>0$.} Note that $\|{\bf w}\|_2\overset{P}{\rightarrow }0$ as $\sigma^2\rightarrow 0$. This implies that for every fixed $\delta>0$, $\exists \sigma^2(\delta)>0$ such that 
\begin{equation}\label{eq:rrm_hsc2}
\begin{array}{ll}
\mathbb{P}(\mathcal{A}_1\cap \mathcal{A}_2\cap \mathcal{A}_3)&\geq \mathbb{P}\left(\|{\bf w}\|_2\leq \min(\epsilon_{rrm},\tilde{\epsilon}_{rrt},\epsilon_{omp})\right)\\ 
&\geq 1-\frac{\delta}{2}
\end{array}
\end{equation}
 for all $\sigma^2<\sigma^2(\delta)$. Combining (\ref{eq:rrm_hsc1}) and (\ref{eq:rrm_hsc2}), one can obtain $\mathbb{P}(\mathcal{A}_1\cap \mathcal{A}_2\cap \mathcal{A}_3\cap \mathcal{A}_4)\geq 1-\delta$ for all $\sigma^2<\sigma^2(\delta)$. Since this is true for all $\delta>0$, we have $\underset{\sigma^2\rightarrow 0}{\lim}\mathbb{P}(S_{RRM}=\mathcal{S})=1$. 
 
 \end{proof}
 \section*{Appendix E: Proof of Theorem \ref{thm:RRTA} }
 \begin{proof}
{Define the events $\mathcal{E}_1=\{\mathcal{S}_{k_0}=\mathcal{S}\}=\{k_{min}=k_0\}$, $\mathcal{E}_2=\{RR(k_0)<\Gamma_{RRT}^{\alpha^*}(k_0)\}$ and $\mathcal{E}_3=\{RR(k)>\Gamma_{RRT}^{\alpha^*}(k),\forall k>k_{min} \}$.   Event $\mathcal{E}_1 \cap \mathcal{E}_2$ implies that the RRTA estimate $k_{RRTA}=\max\{k:RR(k)<\Gamma_{RRT}^{\alpha^*}(k)\}$ satisfies $k_{RRTA}\geq k_{min}$, whereas,   the event $\mathcal{E}_2 \cap \mathcal{E}_3$ implies that the RRTA estimate $k_{RRTA}\leq k_{min}$. Hence, Event $\mathcal{E}_1 \cap \mathcal{E}_2\cap \mathcal{E}_3$ implies that $k_{RRTA}=k_{min}=k_0$ and $\mathcal{S}_{RRTA}=\mathcal{S}$. Hence $\mathbb{P}(\mathcal{S}_{RRTA}=\mathcal{S})\geq \mathbb{P}(\mathcal{E}_1 \cap \mathcal{E}_2\cap \mathcal{E}_3)$.  }

By Lemma \ref{lemma:latest_omp}, $\mathcal{E}_1$ is true once $\|{\bf w}\|_2\leq \epsilon_{omp}$. This along with $\|{\bf w}\|_2\overset{P}{\rightarrow} 0$ as $\sigma^2\rightarrow 0$ implies that $\underset{\sigma^2\rightarrow 0}{\lim}\mathbb{P}(\mathcal{E}_1)=1$. Next we consider $\mathcal{E}_2$. By the definition of $\mathcal{E}_2$
\begin{equation}
\mathbb{P}(\mathcal{E}_2)=\mathbb{P}\left(\dfrac{\Gamma_{RRT}^{\alpha^*}(k)}{\underset{k}{\min}RR(k)}\dfrac{\underset{k}{\min}RR(k)}{RR(k_0)}>1\right)
\end{equation}
Following Theorem \ref{thm:rmm_hsc} and it's proof, we know that $\underset{k}{\min}RR(k)\overset{P}{\rightarrow} RR(k_0)$ as $\sigma^2\rightarrow 0$.  Hence, $\dfrac{\underset{k}{\min}RR(k)}{RR(k_0)}\overset{P}{\rightarrow }1  $ as $\sigma^2\rightarrow 0$. 
From Theorem \ref{thm:rmm_hsc}, we also know that $\underset{k}{\min}RR(k)\overset{P}{\rightarrow} 0$ as $\sigma^2\rightarrow 0$. Since the function $\alpha^*(x)=\min(PFD_{finite},x^q)$ is continuous around $x=0$ for every $q>0$ and $PFD_{finite}>0$, this implies\footnote{Suppose that a R.V $Z \overset{P}{\rightarrow} c$ and $g(x)$ is a function continuous at $x=c$. Then $g(Z)\overset{P}{\rightarrow}g(c)$\cite{wasserman2013all}. } that $\alpha^*=\min(PFD_{finite},\underset{k}{\min}RR(k)^q)\overset{P}{\rightarrow }0$ as $\sigma^2\rightarrow 0$.  
\begin{lemma}\label{lemma:Gamma} For any function $f(x)\rightarrow 0$ as $x\rightarrow 0$, $\Gamma_{RRT}^{f(x)}(k_0)/x \rightarrow \infty$ as $x\rightarrow 0$ once $f(x)^{\frac{2}{n-k_0}}/x^2 \rightarrow \infty$ as $x\rightarrow 0$.
\end{lemma}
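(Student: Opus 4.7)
The plan is to reduce the claim to the small-argument behavior of the inverse Beta CDF. By the definition of $\Gamma_{RRT}^{\alpha}(k)$ in Lemma \ref{lemma:RR_properties}, we have $\Gamma_{RRT}^{f(x)}(k_0)^2 = F^{-1}_{a,b}\!\left(\tfrac{f(x)}{k_{max}(p-k_0+1)}\right)$ with $a=(n-k_0)/2$ and $b=1/2$. Since $f(x)\to 0$ as $x\to 0$, the argument of the inverse Beta CDF tends to zero, so the crux of the proof is to pin down how $F^{-1}_{a,b}(z)$ behaves as $z\to 0^+$.

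The key auxiliary fact I would establish is the asymptotic $F^{-1}_{a,b}(z) \sim (aB(a,b))^{1/a}\, z^{1/a}$ as $z \to 0^{+}$. To prove it, I would analyze the incomplete Beta integral directly: on $[0,y]$ with $y$ small, the factor $(1-t)^{b-1}=(1-t)^{-1/2}$ is bounded between $1$ and $(1-y)^{-1/2}$, both of which converge to $1$ as $y\to 0^+$. Sandwiching gives $F_{a,b}(y) = \frac{y^a}{a\,B(a,b)}(1+o(1))$. Because $F_{a,b}$ is continuous and strictly increasing, this leading-order expansion transfers to its inverse, yielding the claimed form for $F^{-1}_{a,b}(z)$.

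Substituting $a=(n-k_0)/2$ and absorbing the constants $k_{max}$, $p-k_0+1$, $B(a,b)$ into a single positive constant $C>0$ depending only on $n,k_0,k_{max},p$, I obtain $\Gamma_{RRT}^{f(x)}(k_0) = C\, f(x)^{1/(n-k_0)}(1+o(1))$ as $x\to 0$. Dividing by $x$,
\[
\frac{\Gamma_{RRT}^{f(x)}(k_0)}{x} \;=\; C\sqrt{\frac{f(x)^{2/(n-k_0)}}{x^{2}}}\,(1+o(1)),
\]
and the hypothesis $f(x)^{2/(n-k_0)}/x^2\to\infty$ as $x\to 0$ immediately delivers the conclusion $\Gamma_{RRT}^{f(x)}(k_0)/x\to\infty$.

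The main obstacle I anticipate is making the leading-order expansion of $F_{a,b}$, and in particular its inverse, precise enough to justify the multiplicative $(1+o(1))$ factor rigorously; one must handle $(1-t)^{-1/2}$ carefully (the exponent is negative) and then invoke the monotonicity and continuity of $F_{a,b}$ to invert the asymptotic. Once the expansion $F^{-1}_{a,b}(z)\sim (aB(a,b))^{1/a}z^{1/a}$ is in hand, the rest of the argument is a direct algebraic rearrangement.
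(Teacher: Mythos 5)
Your proof is correct, and its overall strategy coincides with the paper's: both arguments reduce the claim to the small-argument behavior of the inverse Beta CDF, namely $F^{-1}_{a,b}(z)\sim (aB(a,b))^{1/a}z^{1/a}$ as $z\to 0^+$ with $a=(n-k_0)/2$, $b=1/2$, and then conclude by the same algebraic substitution. Where you differ is in how that key asymptotic is obtained. The paper simply quotes a known series expansion of $F^{-1}_{a,b}(z)$ about $z=0$ (from the Wolfram functions site), keeps the leading term $\rho(n,1)=(azB(a,b))^{1/a}$, and observes that the higher-order terms $\rho(n,l)\propto f(x)^{2l/(n-k_0)}$ for $l>1$ are negligible relative to the leading one. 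You instead derive the leading-order asymptotic from first principles: sandwiching $(1-t)^{b-1}$ between $1$ and $(1-y)^{b-1}$ on $[0,y]$ gives $F_{a,b}(y)=\frac{y^a}{aB(a,b)}(1+o(1))$, and the strict monotonicity and continuity of $F_{a,b}$ let you transfer this to the inverse. Your route is self-contained and arguably more rigorous on the one point where the paper is least explicit (the validity and uniformity of the cited expansion, and the inversion of a leading-order asymptotic), at the cost of having to justify the sandwich and the inversion step carefully; the paper's route is shorter but rests on an external reference. Both yield $\Gamma_{RRT}^{f(x)}(k_0)/x = C\sqrt{f(x)^{2/(n-k_0)}/x^2}\,(1+o(1))$ and hence the stated conclusion, so your argument is a valid (and slightly strengthened) substitute for the paper's Appendix F.
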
      
\begin{proof}
Please see Appendix F.
\end{proof}
Please note that the function $f(x)=\min(PFD_{finite},x^q)$ satisfies $f(x)^{\frac{2}{n-k_0}}/x^2 \rightarrow \infty$ as $x \rightarrow 0$ once $2q/(n-k_0)<2$ which is  true once $n>k_0+q$.  Since the function  $\alpha^*=f(RR(k))=\min(PFD_{finite},RR(k)^q)$ is continuous around zero and  $\underset{k}{\min}RR(k)\overset{P}{\rightarrow }0$  as $\sigma^2 \rightarrow 0$, $\dfrac{\Gamma_{RRT}^{\alpha^*}(k)}{\underset{k}{\min}RR(k)}\overset{P}{\rightarrow} \infty$  as $\sigma^2\rightarrow 0$ once $n>k_0+q$. Since   $\dfrac{\Gamma_{RRT}^{\alpha^*}(k)}{\underset{k}{\min}RR(k)}\overset{P}{\rightarrow }\infty$ and   $\dfrac{\underset{k}{\min}RR(k)}{RR(k_0)} \overset{P}{\rightarrow} 1$, we have  $\underset{\sigma^2\rightarrow 0}{\lim}\mathbb{P}\left(\dfrac{\Gamma_{RRT}^{\alpha^*}(k)}{\underset{k}{\min}RR(k)}\dfrac{\underset{k}{\min}RR(k)}{RR(k_0)}>1\right)=1$ and $\underset{\sigma^2\rightarrow 0}{\lim}\mathbb{P}(\mathcal{E}_2)=1$.  

Next we consider the event $\mathcal{E}_3=\{RR(k)>\Gamma_{RRT}^{\alpha^*}(k),\forall k>k_{min} \}$. Please note that the bound  $\mathbb{P}(RR(k)>\Gamma_{RRT}^{\alpha}(k), \forall k>k_{min})\geq 1-\alpha$ for all $\sigma^2>0$ in Lemma \ref{lemma:RR_properties} is derived assuming that $\alpha$ is a deterministic quantity. However, $\alpha^*=\min(PFD_{finite},\underset{k}{\min} RR(k)^q)$ in RRTA  is a stochastic quantity and hence Lemma \ref{lemma:RR_properties} is not directly applicable. Note that  for any $\delta>0$, we have
\begin{equation}
\begin{array}{ll}
\mathbb{P}(RR(k)>\Gamma_{RRT}^{\alpha^*}(k),\forall k>k_{min} )\\  
\ \ \ \ \ \ \ \ \ \ \ \  \overset{(a)}{\geq} \mathbb{P}(\{RR(k)>\Gamma_{RRT}^{\alpha^*}(k),\forall k>k_{min}\} \cap \{\alpha^*\leq \delta\} )\\
\ \ \ \ \ \ \ \ \ \ \ \ \   \overset{(b)}{\geq} \mathbb{P}(\{RR(k)>\Gamma_{RRT}^{\delta}(k),\forall k>k_{min}\} \cap \{\alpha^*\leq \delta\} )
\end{array}  
\end{equation}
(a) follows from the intersection bound $\mathbb{P}(A\cap B)\geq \mathbb{P}(A)$.  Note that $F^{-1}_{a,b}(z)$ is a monotonically increasing function of $z$. This implies $\Gamma_{RRT}^{\alpha}(k)<\Gamma_{RRT}^{\delta}(k)$ when $\alpha<\delta$. (b) follows from this. 

 Note that by Lemma \ref{lemma:RR_properties}, we have $\mathbb{P}(\{RR(k)>\Gamma_{RRT}^{\delta}(k),\forall k>k_{min}\})\geq 1-\delta$ for all $\sigma^2>0$.  Further, $\alpha^*\overset{P}{\rightarrow} 0$ as $\sigma^2\rightarrow $  implies that $\underset{\sigma^2\rightarrow 0}{\lim}\mathbb{P}({\alpha^*\leq \delta})=1$. 
These two results together imply  $\underset{\sigma^2\rightarrow 0}{\lim}\mathbb{P}(\{RR(k)>\Gamma_{RRT}^{\alpha^*}(k),\forall k>k_{min}\} \cap \{\alpha^*\leq \delta\} )\geq 1-\delta$. 
Since this is true for all $\delta>0$, we have $\underset{\sigma^2\rightarrow 0}{\lim}\mathbb{P}(\{RR(k)>\Gamma_{RRT}^{\alpha^*}(k),\forall k>k_{min}\} \cap \{\alpha^*\leq \delta\} )=1$ which in turn imply $\underset{\sigma^2\rightarrow 0}{\lim}\mathbb{P}(\mathcal{E}_3)=\underset{\sigma^2\rightarrow 0}{\lim}\mathbb{P}(RR(k)>\Gamma_{RRT}^{\alpha^*}(k),\forall k>k_{min} )=1$.

Since $\underset{\sigma^2\rightarrow 0}{\lim}\mathbb{P}(\mathcal{E}_j)=1$ for $j=1,2$ and $3$, it follows that $\underset{\sigma^2\rightarrow 0}{\lim}\mathbb{P}(\mathcal{S}_{RRTA}=\mathcal{S})\geq \underset{\sigma^2\rightarrow 0}{\lim}\mathbb{P}(\mathcal{E}_1 \cap \mathcal{E}_2\cap \mathcal{E}_3)=1$. 

 \end{proof}
 \section*{Appendix F: Proof of Lemma \ref{lemma:Gamma} }
 \begin{proof}
 Expanding  $F^{-1}_{a,b}(z)$  at $z=0$ using the expansion given in [http://functions.wolfram.com/06.23.06.0001.01] gives
\begin{equation}\label{beta_exp}
\begin{array}{ll}
F^{-1}_{a,b}(z)=\rho(n,1)+\dfrac{b-1}{a+1}\rho(n,2) \\
+\dfrac{(b-1)(a^2+3ab-a+5b-4)}{2(a+1)^2(a+2)}\rho(n,3)
+O(z^{(4/a)})
\end{array}
\end{equation}
for all $a>0$. Here $\rho(n,l)=(az{B}(a,b))^{(l/a)}$.  Note that $\Gamma_{RRT}^{f(x)}(k_0)=\sqrt{F^{-1}_{\frac{n-k_0}{2},{\frac{1}{2}}}\left(\dfrac{f(x)}{k_{max}(p-k_0+1)}\right)}$.  We associate $a=\frac{n-k_0}{2}$, $b=1/2$ , $z=\dfrac{f(x)}{k_{max}(p-k_0+1)}$ and $\rho(n,l)=(az{B}(a,b))^{(l/a)}=\left(\frac{\left(\frac{n-k_0}{2}\right)f(x){B}\left(\frac{n-k_0}{2},0.5\right)}{k_{max}(p-k_0+1)}\right)^{\frac{2l}{n-k_0}}$ for $l\geq 1$.    $\dfrac{\Gamma_{RRT}^{f(x)}(k_0)}{x}=\sqrt{\dfrac{F^{-1}_{\frac{n-k_0}{2},\frac{1}{2}}\left(\dfrac{f(x)}{k_{max}(p-k_0+1)}\right)}{x^2}}$. {  Note that the term $f(x)^{\frac{2l}{n-k_0}}$ is the only term in $\rho(n.l)$ that depends on $x$.  Now from the expansion and the fact that $\underset{x \rightarrow 0}{\lim}f(x)^{\frac{2l}{n-k_0}}/f(x)^{\frac{2}{n-k_0}}=0$ for each $l>1$, it is clear that $\sqrt{\dfrac{F^{-1}_{\frac{n-k_0}{2},\frac{1}{2}}\left(\dfrac{f(x)}{k_{max}(p-k_0+1)}\right)}{x^2}} \rightarrow \infty$ as $x \rightarrow 0$ once  $f(x)^{\frac{2}{n-k_0}}/x^2\rightarrow \infty$. }
 \end{proof}
\bibliographystyle{IEEEtran}
\bibliography{compressive}

\begin{thebibliography}{10}
\providecommand{\url}[1]{#1}
\csname url@samestyle\endcsname
\providecommand{\newblock}{\relax}
\providecommand{\bibinfo}[2]{#2}
\providecommand{\BIBentrySTDinterwordspacing}{\spaceskip=0pt\relax}
\providecommand{\BIBentryALTinterwordstretchfactor}{4}
\providecommand{\BIBentryALTinterwordspacing}{\spaceskip=\fontdimen2\font plus
\BIBentryALTinterwordstretchfactor\fontdimen3\font minus
  \fontdimen4\font\relax}
\providecommand{\BIBforeignlanguage}[2]{{%
\expandafter\ifx\csname l@#1\endcsname\relax
\typeout{** WARNING: IEEEtran.bst: No hyphenation pattern has been}%
\typeout{** loaded for the language `#1'. Using the pattern for}%
\typeout{** the default language instead.}%
\else
\language=\csname l@#1\endcsname
\fi
#2}}
\providecommand{\BIBdecl}{\relax}
\BIBdecl

\bibitem{eldar2012compressed}
Y.~C. Eldar and G.~Kutyniok, \emph{Compressed sensing: {T}heory and
  applications}.\hskip 1em plus 0.5em minus 0.4em\relax Cambridge University
  Press, 2012.

\bibitem{choi2017detection}
J.~W. Choi and B.~Shim, ``Detection of large-scale wireless systems via sparse
  error recovery,'' \emph{IEEE Trans. Signal Process.}, vol.~65, no.~22, pp.
  6038--6052, 2017.

\bibitem{yu2012compressed}
C.-M. Yu, S.-H. Hsieh, H.-W. Liang, C.-S. Lu, W.-H. Chung, S.-Y. Kuo, and S.-C.
  Pei, ``Compressed sensing detector design for space shift keying in {MIMO}
  systems,'' \emph{IEEE Commun. Lett.}, vol.~16, no.~10, pp. 1556--1559, 2012.

\bibitem{kallummil2016combining}
S.~Kallummil and S.~Kalyani, ``Combining {ML} and compressive sensing:
  {D}etection schemes for generalized space shift keying,'' \emph{IEEE Wireless
  Commun. Lett.}, vol.~5, no.~1, pp. 72--75, 2016.

\bibitem{shim2012multiuser}
B.~Shim and B.~Song, ``Multiuser detection via compressive sensing,''
  \emph{IEEE Commun. Lett.}, vol.~16, no.~7, pp. 972--974, 2012.

\bibitem{you2016scalable}
C.~You, D.~Robinson, and R.~Vidal, ``Scalable sparse subspace clustering by
  orthogonal matching pursuit,'' in \emph{Proc. IEEE CVPR}, 2016, pp.
  3918--3927.

\bibitem{ding2011inconsistency}
Q.~Ding and S.~Kay, ``Inconsistency of the {MDL}: {O}n the performance of model
  order selection criteria with increasing signal-to-noise ratio,'' \emph{IEEE
  Trans. Signal Process.}, vol.~59, no.~5, pp. 1959--1969, May 2011.

\bibitem{schmidt2012consistency}
D.~Schmidt and E.~Makalic, ``The consistency of {MDL} for linear regression
  models with increasing signal-to-noise ratio,'' \emph{IEEE Trans. Signal
  Process.}, vol.~60, no.~3, pp. 1508--1510, March 2012.

\bibitem{stoica2012proper}
P.~Stoica and P.~Babu, ``On the proper forms of {BIC} for model order
  selection,'' \emph{IEEE Trans. Signal Process.}, vol.~60, no.~9, pp.
  4956--4961, Sept 2012.

\bibitem{stoica2013model}
------, ``Model order estimation via penalizing adaptively the likelihood
  ({PAL}),'' \emph{Signal Processing}, vol.~93, no.~11, pp. 2865 -- 2871, 2013.

\bibitem{SNLShighSNR}
J.~M{\"a}{\"a}tt{\"a}, D.~F. Schmidt, and T.~Roos, ``Subset selection in linear
  regression using sequentially normalized least squares: {A}symptotic
  theory,'' \emph{SCAND. J. STAT.}, 2015.

\bibitem{tsp}
S.~Kallummil and S.~Kalyani, ``High {SNR} consistent linear model order
  selection and subset selection,'' \emph{IEEE Trans. Signal Process.},
  vol.~64, no.~16, pp. 4307--4322, Aug 2016.

\bibitem{spl}
K.~Sreejith and S.~Kalyani, ``High {SNR} consistent thresholding for variable
  selection,'' \emph{IEEE Signal Process. Lett.}, vol.~22, no.~11, pp.
  1940--1944, Nov 2015.

\bibitem{elsevier}
S.~Kallummil and S.~Kalyani, ``High {SNR} consistent compressive sensing,''
  \emph{Signal Processing}, vol. 146, pp. 1--14, 2018.

\bibitem{eefenumeration}
C.~Xu and S.~Kay, ``Source enumeration via the {EEF} criterion,'' \emph{IEEE
  Signal Process. Lett.}, vol.~15, pp. 569--572, 2008.

\bibitem{tropp2004greed}
J.~A. Tropp, ``Greed is good: {A}lgorithmic results for sparse approximation,''
  \emph{IEEE Trans. Inf. Theory}, vol.~50, no.~10, pp. 2231--2242, 2004.

\bibitem{cai2011orthogonal}
T.~Cai and L.~Wang, ``Orthogonal matching pursuit for sparse signal recovery
  with noise,'' \emph{IEEE Trans. Inf. Theory}, vol.~57, no.~7, pp. 4680--4688,
  July 2011.

\bibitem{tropp2007signal}
J.~A. Tropp and A.~C. Gilbert, ``Signal recovery from random measurements via
  orthogonal matching pursuit,'' \emph{IEEE Trans. Inf. Theory}, vol.~53,
  no.~12, pp. 4655--4666, 2007.

\bibitem{tropp2006just}
J.~Tropp, ``Just relax: {C}onvex programming methods for identifying sparse
  signals in noise,'' \emph{IEEE Trans. Inf. Theory}, vol.~52, no.~3, pp.
  1030--1051, March 2006.

\bibitem{icml}
S.~Kallummil and S.~Kalyani, ``Signal and noise statistics oblivious orthogonal
  matching pursuit,'' in \emph{Proc. ICML}, vol.~80.\hskip 1em plus 0.5em minus
  0.4em\relax PMLR, July 2018, pp. 2429--2438.

\bibitem{robust}
------, ``Noise statistics oblivious {GARD} for robust regression with sparse
  outliers,'' \emph{arXiv:1809.07222}, 2018.

\bibitem{mos}
------, ``Residual ratio thresholding for model order selection,'' \emph{arXiv
  preprint arXiv:1805.02229}, 2018.

\bibitem{latest_omp}
C.~Liu, F.~Yong, and J.~Liu, ``Some new results about sufficient conditions for
  exact support recovery of sparse signals via orthogonal matching pursuit,''
  \emph{IEEE Trans. Signal Process.}, vol.~PP, no.~99, pp. 1--1, 2017.

\bibitem{omp_rip_noise}
R.~Wu, W.~Huang, and D.~R. Chen, ``The exact support recovery of sparse signals
  with noise via orthogonal matching pursuit,'' \emph{IEEE Signal Process.
  Lett.}, vol.~20, no.~4, pp. 403--406, April 2013.

\bibitem{spicenote}
C.~R. Rojas, D.~Katselis, and H.~Hjalmarsson, ``A note on the {SPICE} method,''
  \emph{IEEE Trans. Signal Process.}, vol.~61, no.~18, pp. 4545--4551, Sept
  2013.

\bibitem{spice_connection}
P.~Babu and P.~Stoica, ``Connection between {SPICE} and square-root {LASSO} for
  sparse parameter estimation,'' \emph{Signal Processing}, vol.~95, pp. 10 --
  14, 2014.

\bibitem{spice}
P.~Stoica, P.~Babu, and J.~Li, ``{SPICE}: {A} sparse covariance-based
  estimation method for array processing,'' \emph{IEEE Trans. Signal Process.},
  vol.~59, no.~2, pp. 629--638, Feb 2011.

\bibitem{spice_like}
P.~Stoica and P.~Babu, ``{SPICE} and {LIKES}: {T}wo hyperparameter-free methods
  for sparse-parameter estimation,'' \emph{Signal Processing}, vol.~92, no.~7,
  pp. 1580 -- 1590, 2012.

\bibitem{Stoica20141}
\BIBentryALTinterwordspacing
P.~Stoica, D.~Zachariah, and J.~Li, ``Weighted {SPICE}: {A} unifying approach
  for hyperparameter-free sparse estimation,'' \emph{Digital Signal
  Processing}, vol.~33, pp. 1 -- 12, 2014. [Online]. Available:
  \url{http://www.sciencedirect.com/science/article/pii/S1051200414001973}
\BIBentrySTDinterwordspacing

\bibitem{wipf2004sparse}
D.~P. Wipf and B.~D. Rao, ``Sparse {B}ayesian learning for basis selection,''
  \emph{IEEE Trans. Signal Process.}, vol.~52, no.~8, pp. 2153--2164, 2004.

\bibitem{elad_book}
M.~Elad, \emph{Sparse and Redundant Representations: {F}rom Theory to
  Applications in Signal and Image Processing}.\hskip 1em plus 0.5em minus
  0.4em\relax Springer, 2010.

\bibitem{wasserman2013all}
L.~Wasserman, \emph{All of statistics: {A} concise course in statistical
  inference}.\hskip 1em plus 0.5em minus 0.4em\relax Springer Science \&
  Business Media, 2013.

\end{thebibliography}

%
%
%
\end{document}